\pdfoutput=1
\documentclass[11pt]{article}
\usepackage[letterpaper, left=1in, right=1in, top=1in,bottom=1in]{geometry}

\usepackage[ruled]{algorithm2e} % For algorithms

\SetAlFnt{\small}
\SetAlCapFnt{\small}
\SetAlCapNameFnt{\small}
\SetAlCapHSkip{0pt}
\IncMargin{-\parindent}

%% packages
\usepackage{color}              % Need the color package
\usepackage[suppress]{color-edits}
\addauthor{et}{blue}
\addauthor{pj}{green}
\usepackage{hyperref}
\usepackage{graphicx}
\usepackage{amsmath,amssymb,amsthm}
\usepackage{tikz}
\usetikzlibrary{chains,arrows}
%\usepackage{todonotes}

%% theorems
\theoremstyle{plain}

\newtheorem{theorem}{Theorem}[section]
\newtheorem{lemma}[theorem]{Lemma}
\newtheorem{corollary}[theorem]{Corollary}
\theoremstyle{definition}

\def\squareforqed{\hbox{\rule{2.5mm}{2.5mm}}}

\def\QED{\ifmmode\squareforqed % in mathmode : print just the square
  \else{\nobreak\hfil   % \hfil to end of current line
    \penalty50                 % penalty 50 for breaking the line here
    \hskip1em                  % leave at least 1em before the square
    \null                      % \hbox{}
    \nobreak                   % prohibit line break
    \hfil                      % another \hfil (if a break occurred)
    \squareforqed              % put the square here
    \parfillskip=0pt           % the line really ends here
    \finalhyphendemerits=0     % ignore a hyphen on the line above
    \endgraf}                  % end the paragraph
  \fi}

\def\blksquare{\rule{2mm}{2mm}}
\def\qedsymbol{\blksquare}
\newcommand{\bg}[1]{\medskip\noindent{\bf #1}}
\newcommand{\ed}{{\hfill\qedsymbol}\medskip}

%\newenvironment{proof}{\bg{Proof : }}{\ed}

%% Paul: The following is only needed for having a primed version of the main
%% theorem and can be removed otherwise

%% for custom theorem numbering
\newtheorem{thm}{Theorem}

%% referencing commands

%\newcommand{\SubmissionOmit}[1]{#1} %version for longer text
%\newcommand{\FullversionOmit}[1]{} %version for longer text
\newcommand{\SubmissionOmit}[1]{} %version for submission
 %version for submission

%\newtheorem{theorem}{Theorem}
%%\newtheorem{lemma}[theorem]{Lemma}
%\newtheorem{prop}[theorem]{Proposition}
%S\newtheorem{claim}[theorem]{Claim}
%\newtheorem{subclaim}{Claim}[theorem]
%\newtheorem{cor}[theorem]{Corollary}
%%\newtheorem{corollary}[theorem]{Corollary}
%\newtheorem{observation}[theorem]{Observation}
%\newtheorem{conj}[theorem]{Conjecture}
%\newtheorem{fact}[theorem]{Fact}

%% abbreviations

%% notation

%\newcommand{\Ex}[1]{\mbox{\rm\bf E}\left[#1\right]}

%% comments
%\newcommand{\pd}[1]{{\bf xx pd: #1 xx}}

\newcommand{\fullversion}[1]{}

%\sloppy

\usepackage[ruled]{algorithm2e}

\usepackage{amsmath,amsfonts,amssymb,bbm}

\usepackage{dsfont}
\usepackage{ifthen}
\usepackage{commath}
\usepackage{graphicx}
\usepackage{enumerate}
\usepackage{color}
\usepackage{tikz}
\usepackage{hyperref}
\usepackage[capitalize]{cleveref}

\usetikzlibrary{calc,math,graphs,positioning,quotes,arrows.meta} %graphdrawing

\tikzstyle{marketstyle}=[x=3cm,y=2cm,node font=\small]%
\tikzstyle{every edge quotes}=[auto=false,fill=white,opacity=.95,text opacity=1,pos=.25,inner sep=1pt]%
\tikzstyle{every node}+=[inner sep=2.5pt]%

% Package to generate and customize Algorithm as per ACM style
\SetArgSty{textrm}  % for algorithm2e
\SetAlFnt{\small}
\SetAlCapFnt{\small}
\SetAlCapNameFnt{\small}
\SetAlCapHSkip{0pt}
\IncMargin{-\parindent}

%%%%%%%%%%%%%%%%%%%%%%%%%%%%%%%%%%%%%%%%%%%%%%%%%%%%%%%%%%%%%%%%%%%%%%%%%%%%%%%%%%%%%%%%%%%%%%%%%
%% general notation
%%%%%%%%%%%%%%%%%%%%%%%%%%%%%%%%%%%%%%%%%%%%%%%%%%%%%%%%%%%%%%%%%%%%%%%%%%%%%%%%%%%%%%%%%%%%%%%%%

\DeclareMathAlphabet{\mathbfit}{OML}{cmm}{b}{it}

\newcommand{\suchthat}{\;\ifnum\currentgrouptype=16 \middle\fi|\;}

\newcommand{\myref}[2]{\hyperref[#2]{$#1$\ref*{#2}}}

\newcommand{\AutoAdjust}[3]{\mathchoice{ \left #1 #2  \right #3}{#1 #2 #3}{#1 #2 #3}{#1 #2 #3} }

\newcommand{\InBrackets}[1]{\AutoAdjust{[}{#1}{]}}

\newcommand{\RangeAB}[2]{\AutoAdjust{[}{{#1},{#2}}{]}}

\newcommand{\RangeN}[1]{\InBrackets{#1}}

\newcommand{\Reals}{\mathbb{R}}

\newcommand{\PosReals}{\Reals_{\ge 0}}
\newcommand{\SPosReals}{\Reals_{+}}

\newcommand{\PosInts}{\mathbb{N}}

 % separator dot
 % strict separator dot
 % wildcard dot
 % inner product dot

\newcommand{\SpaceStyle}[1]{{\mathrm #1}}
\newcommand{\SpacesStyle}[1]{{\mathbf {#1}}}
%\newcommand{\SpaceStyle}[1]{{\mathds #1}}
%\newcommand{\SpacesStyle}[1]{\pmb{\SpaceStyle{#1}}}

%%%%%%%%%%%%%%%%%%%%%%%%%%%%%%%%%%%%%%%%%%%%%%%%%%%%%%%%%%%%%%%%%%%%%%%%%%%%%%%%%%%%%%%%%%%%%%%%%
%% Matching Without Money
%%%%%%%%%%%%%%%%%%%%%%%%%%%%%%%%%%%%%%%%%%%%%%%%%%%%%%%%%%%%%%%%%%%%%%%%%%%%%%%%%%%%%%%%%%%%%%%%%

\newcommand{\N}{n}

\newcommand{\NumAgents}{n}
\newcommand{\NumItems}{m}
\newcommand{\ItemCount}{C}
\newcommand{\AgentID}[1]{\ensuremath{\operatorname{A}_{#1}}}
\newcommand{\ItemID}[1]{\ensuremath{\operatorname{I}_{#1}}}
\newcommand{\RSD}{RSD}
\newcommand{\Alloc}{x}
\newcommand{\Allocs}{{\mathbf x}}
\newcommand{\Allocss}{{\mathbf x}}
\newcommand{\AllocsY}{{\mathbf y}}
\newcommand{\AllocssY}{{\mathbf y}}
\newcommand{\AgentDim}{m}
\newcommand{\AllocSpace}{\SpaceStyle{S}}
\newcommand{\AllocSpaces}{\SpacesStyle{S}}
\newcommand{\ConsFun}{h}
\newcommand{\NumCons}{d}

\newcommand{\Price}{p}
\newcommand{\Prices}{{\mathbf p}}
\newcommand{\Val}{v}
\newcommand{\Util}{u}
\newcommand{\Utils}{{\mathbf u}}
\newcommand{\DualUtil}{g}
\newcommand{\DemandSup}{D}
\newcommand{\Response}{R}

\crefname{property}{Property}{Properties}

\crefname{program}{Program}{Programs}

%\conferenceinfo{EC'16,}{July 24--28, 2016, Maastricht, The Netherlands.}
%\CopyrightYear{2016}
%\crdata{978-1-4503-3936-0/16/07}
%\copyrighttext{Copyright is held by the owner/author(s). Publication rights licensed to ACM.}

% Document starts
\begin{document}

% Page heads
%\markboth{G. Zhou et al.}{A Multifrequency MAC Specially Designed for WSN Applications}

% Title portion
\title{Computing Equilibrium in Matching Markets}

%\author{Saeed Alaei}
%\affiliation{%
%  \institution{Google Inc.}
%  \department{School of Engineering}
%  \city{Mountain View}
%  \state{CA}
%  \postcode{94043}
%  \country{USA}
%}

%\author{Pooya Jalaly}
%\affiliation{%
%  \institution{Cornell University}
%  \department{Department of Computer Science}
%  \city{Ithaca}
%  \state{NY}
%  \postcode{14850}
%  \country{USA}
%}

%\author{\'{E}va Tardos}
%\affiliation{%
%  \institution{Cornell University}
%  \department{Department of Computer Science}
%  \city{Ithaca}
%  \state{NY}
%  \postcode{14850}
%  \country{USA}
%}

\author{%
	Saeed Alaei \thanks{Google Inc. Email: \texttt{alaei@google.com}.}
	\and
	Pooya Jalaly \thanks{Department of Computer Science, Cornell University, Gates Hall, Ithaca, NY 14853, USA, Email: \texttt{jalaly@cs.cornell.edu}. Supported in part by NSF grant CCF-1563714, and a Google Research Grant.}
	\and
	\'{E}va Tardos\thanks{Department of Computer Science, Cornell University, Gates Hall, Ithaca, NY 14853, USA, Email: \texttt{eva@cs.cornell.edu}. Work supported in part by NSF grant  CCF-1563714, ONR grant N00014-08-1-0031, and a Google Research Grant.}
}

\maketitle
\begin{abstract}
Market equilibria of matching markets offer an intuitive and fair solution for matching problems without money with agents who have preferences over the items. Such a matching market can be viewed as a variation of Fisher market, albeit with rather peculiar preferences of agents. These preferences can be described by piece-wise linear concave (PLC) functions, which however, are not separable (due to each agent only asking for one item), are not monotone, and do not satisfy the gross substitute property-- increase in price of an item can result in increased demand for the item. Devanur and Kannan in FOCS 08 showed that market clearing prices can be found in polynomial time in markets with fixed number of items and general PLC preferences. They also consider Fischer markets with fixed number of agents (instead of fixed number of items), and give a polynomial time algorithm for this case if preferences are separable functions of the items, in addition to being PLC functions.
 
Our main result is a polynomial time algorithm for finding market clearing prices in matching markets with fixed number of different agent preferences, despite that the utility corresponding to matching markets is not separable. We also give a simpler algorithm for the case of matching markets with fixed number of different items.
\end{abstract}

\newpage

%\terms{Design, Algorithms, Performance}

%\keywords{Market Equilibria, Mechanism Design without Money, Fairness}

%\begin{bottomstuff}
%Pooya Jalaly was supported by a Google Research Grant.
%Eva Tardos was supported in part by NSF grant CCF-0910940, ONR grant %N00014-08-1-0031,  and a Google Research Grant.
%\end{bottomstuff}

\section{Introduction}
We consider the problem of matching without money with $\NumAgents$ agents who
have %their unique
preferences over $\NumItems$ items. \etedit{This problem models a range of situations} from assigning students to schools, applicants to jobs, or people to committees. We call such an assignment problem a matching problem, if %it is required that
all agents are required to get a fixed number of items. An intuitive application is the school choice problem. Students have preferences over schools,
and each student needs to get assigned to exactly one school. In this paper we will consider computing the fair randomized solution to this problem proposed by \cite{HZ79} based on market equilibria.

We model the preferences of agents with the value of agent $i$ for being assigned to item $j$ is $v_{ij}$. Using values allows agents to
express the intensity of their preferences. An important property of an allocation is its efficiency. Since agents utilities are meaningless to compare (without money, there is no natural unit to express utility), the best we can hope for is a Pareto-efficient allocation. An allocation is Pareto inefficient if there is an alternate allocation where no agent is worse off, and at least one agent has improved utility. We will consider fractional or randomized allocation. The value of an agent $i$ for the fractional allocation $x_{ij}$ is $\sum_j v_{ij}x_{ij}$, if it obeys the matching constraint. This is the agent's expected value, if $x_{ij}$ is the probability of assigning item $j$ to agent $i$.

%From the perspective of a fixed agent, ordinal preferences are generally
%insufficient for ranking the interim outcomes of a mechanism (i.e., ranking of
%distributions over outcomes), even though they are sufficient for ranking the
%ex-post outcomes. On the other hand cardinal preferences allow an agent to
%express the intensity of their preferences which in turns allows ranking of
%interim outcomes. Therefore, for the rest of this article we work with cardinal
%preferences.

Market equilibria offer an intuitive, fair, and Pareto-efficient solution for problems of allocations of resources to agents who have their own (incomparable) preferences over the items in systems with no money. This was proposed by \cite{HZ79} in the context of matching markets, and by \cite{Dolev:2012} (see also \cite{Gutman-Nisan:2012} and \cite{Martinez:2015}), in the context of allocation of resources in systems. The idea is to endow each agent with equal resource: a unit of (artificial) money. A set of prices $\Prices$ for the items is market clearing, if there is a fractional allocation $\Allocss$ of items to agents such that the following conditions hold (i) each item is allocated at most once, (ii) each agents is allocated her favorite set of items subject to the budget constraint\footnote{Note that agents have no use for the (artificial) money and are simply optimizing their allocated item, subject to their budget.} that $\sum_j \Alloc_{ij}p_j \le 1$, and (iii) the market clears, meaning that all items not fully allocated have price 0. \cite{HZ79} showed that such market equilibria is guaranteed to exists, see also Appendix \ref{sec:existence}. We view the resulting fractional (or randomized) allocation $\Allocss$ as a fair solution to the allocation problem without money, which is also clearly Pareto-efficient and envy-free (no agent prefers the allocation of another agent).\footnote{An alternate way to arrive to the same solution concept is to assign each agent an equal share of each resource, and then look for an equilibrium of the resulting exchange market. To see that this results in an identical outcome, we can think of each agents trade, as a two step-process, where he first sells all his allocated share on the
market prices, and then uses the resulting money to buy his optimal allocation. } We are concerned with computing this solution efficiently.

\paragraph{Computing Market Equilibria and the Odd Demand Structure of Matching Markets.}
Market equilibrium problems where demands satisfy the \emph{gross substitute} condition
are well understood \cite{Bruno:2007}, and can be computed efficiently. The demand structure of our matching problem does not satisfy the gross substitutability condition, which requires that decreasing the price of an item (while keeping all other prices fixed) should never decrease the demand for that item. We show an example in Appendix \ref{ap:examples} that decreased price can cause decreased demand in a matching markets. It is not hard to gain intuition for the phenomena: with the decreased price the agent could get her old allocation, and would have money leftover. In other markets, money can be used to buy additional items. However, in a matching market additional item makes no sense, and instead, the agent may want to exchange his share of a cheaper and less favorable item (possibly the item whose price decreased) for share of a more valuable expensive item.

\cite{Devanur:2008} gave an algorithm to compute market equilibria in markets with a fixed number of items, where agents have piece-wise linear concave (PLC) utility functions, despite the fact that PLC utility functions can give rise to demand not satisfying the gross substitute condition. They also gave a polynomial time algorithm to compute market equilibria in markets with a fixed number of agents, where agents have piece-wise linear concave and \emph{separable} utility functions. They leave as an open problem to give a polynomial time algorithm to compute market equilibria in markets with a fixed number of agents and general PLC utilities that are not separable. We will show in Section \ref{sec:prelim} that demand structure of the matching market can be modeled by a piece-wise linear concave (PLC) utility function, which however, is neither separable nor monotone. This allows us to use the algorithm of \cite{Devanur:2008} to find a market equilibrium is the number of goods is fixed, but leaves open the question whether market equilibrium can be found in polynomial time if the number of different agents is finite instead.

\paragraph{Our Results.}
We give a polynomial time exact algorithm for finding market equilibria of matching markets with a fixed number of agents, extending the work of \cite{Devanur:2008} to the case of matching markets with a fixed number of agents, despite the fact that utilities describing matching markets are not separable. Our algorithm in Section \ref{Sec:FixedAgents} is based on the structural Theorem \ref{Thm:FixedAgentsEqChar}, and explores a polynomial number of possible player utility values and allocation structures, and finds a market equilibrium in polynomial time when the number of agents is fixed. The algorithm also extend to the case when there are only a finite number of different agent utility types.

In case of large number of items and finite number of agents, when each individual item is insignificant, our allocation can be used for finding an approximately optimal integer solution. We achieve this by showing in Lemma \ref{Lem:ReBundling}  that we find an allocation in which the number of items which are shared by the agents is $O(\NumAgents^2)$, which is constant when the number of agents is constant.

In Appendix \ref{Sec:FixedGoods}, we consider the problem with a fixed number of goods. In this case, the algorithm of \cite{Devanur:2008} can find market equilibria in polynomial time. We give a simpler algorithm which is tailored for matching markets. With $m$ different goods and $n$ agents, our algorithm enumerates a polynomial number of different set of prices and allocation structures for the equilibrium.

%In section \ref{Sec:FixedAgents}, we discuss the problem with a fixed number of agents. Again, our algorithm enumerates polynomial number of cells (this time each cell correspond to the optimum values of agents) in order to find an equilibrium. Our model with fixed number of agents can be used for applications like load balancing or recourse allocation.
%
%One thing to note is that even though we find a fractional (randomized) solution, lemma \ref{Lem:ReBundling} shows that we can find an allocation in which the number of items which are shared by the agents is $O(\NumAgents^2)$ which is constant when the number of agents is fixed. This implies that when each individual item is insignificant, our allocation can be used for finding an approximately optimal deterministic solution.

\paragraph{Related Work.}
The problem of fairly allocating items to unit demand agents without money has been
studied extensively in both Economics and Computer Science literature.  Perhaps
the most well known solution to this problem is the \emph{random serial dictatorship (RSD)} \cite{AS98} --- also known as \emph{random priority (RP)} --- in which agents are served
sequentially according to a random permutation, and each agent in turn receives
her most preferred item among the remaining ones. Clearly, serial dictatorship is Pareto efficient, and as a result,  \RSD{} is ex post Pareto-efficient, i.e., Pareto-efficient given the order used. However, the expected allocation of \RSD{} may not be Pareto-efficient, i.e., its  not interim Pareto-efficient. In Appendix \ref{ap:examples} we give an example where the expected allocation of \RSD{} can be Pareto improved just using the order of player preferences, showing that \RSD{} may be Pareto-inefficient even with ordinal preferences.

An alternative solution called probabilistic serial
(PS) was proposed by \cite{BM01} which is both envy-free and Pareto-efficient with
respect to ordinal preferences. The PS mechanism is,
however, not Pareto-efficient with cardinal preferences. This is possible, as ordinal preferences are not always sufficient for ranking the randomized (interim) allocations, (i.e., ranking of distributions over outcomes).

The mechanism we study in this paper, based on
market equilibrium from equal income, has been proposed in this context
by \cite{HZ79}, is both envy-free and Pareto-efficient even
with respect to cardinal preferences. Note that neither PS nor the market equilibrium mechanism is strategy-proof. However,
\cite{Z90} shows that for $\NumAgents \ge 3$ agents there is no mechanism that
is strategyproof, Pareto-efficient, and envy-free.

\cite{HZ79} proves that equilibrium is guaranteed to exist (see also Appendix \ref{sec:existence}), and propose an
exponential time algorithm for finding approximate
equilibrium, whereas the current paper proposes an exact algorithm for
computing equilibrium which runs in polynomial time when either the number of
agents or the number of items is constant. Most of the recent work on the
problem of assignment without money has been focused on analyzing the efficiency
of RSD and PS mechanism under cardinal and/or ordinal preferences, e.g.,
\cite{BCK11a,ASZ14}.

The main techniques used in the current paper are based on the cell
decomposition result of \cite{CellDecomp:98} which has also been used by
\cite{Devanur:2008} to derive a polynomial time algorithm for a related
market equilibrium computation problem. \etedit{We show how to find equilibria of matching problems in polynomial time when the number of agents is fixed. In Appendix \ref{Sec:FixedGoods} we also give an algorithm to find equilibria in matching markets with a fixed number of goods. } While the algorithm of \cite{Devanur:2008} \etedit{can be used for this latter case}, our algorithm is simpler: we avoid some complications (for instance their primal dual technique for checking the market clearing conditions), and we use a simpler cell decomposition theorem. The case of fixed number of agents has been studied by \cite{echenique2012}. However, they assume that the agents' utility functions are strictly concave and strictly monotone, which does not apply to our problem. They also approximate the Walrasian equilibrium, while our main goal is to find the exact value of equilibrium prices and allocations.

\section{Preliminaries}\label{sec:prelim}
In this section we review the matching problem with \etedit{additive} %linear
preferences and the market equilibrium solution we aim to compute.
%including the examples mentioned in the introduction.
Then we'll discuss our main technical tool, the cell decomposition technique of Basu, Pollack, and Roy \cite{CellDecomp:98}.

\paragraph{The Matching Problem}
The problem is defined by a set of $\NumItems$ items, and $\ItemCount_j\ge 0$ amount available of item $j$, and a set of $\NumAgents$ agents. The matching problem requires that we allocate exactly 1 unit of these items to all agents. The amount $\ItemCount_j$ available of each item $j$ may be very small, so the 1 unit allocated to an agent may need to be combined from small fractions of many different items. An allocation $\{\Alloc_{ij}\}$ for all agents $i$ is a feasible solution of the matching problem, if
\begin{eqnarray*}
\sum_j \Alloc_{ij}&=& 1  \mbox{   for all $i \in [\NumAgents]$}\\
\sum_i \Alloc_{ij}&\le& \ItemCount_j  \mbox{   for all $j \in [\NumItems]$}\\
\Alloc_{ij} & \ge & 0 \mbox{   for all $i \in [\NumAgents]$ and $j \in [\NumItems]$}\\
\end{eqnarray*}
We assume agent $i$ has value $\Val_{ij}$ for a unit of item $j$. So her value for a set of $\{\Alloc_{ij}\}$ amounts of each item $j$ is $\sum_j \Val_{ij}\Alloc_{ij}$, assuming $\sum_j \Alloc_{ij} = 1$.

More generally, we can require to allocate different amounts for different agents, and allow the matching constraint to be only an upper bound, that is, allocate at most 1 unit to each agent. For simplicity of presentation, in this paper we will use equal amount required for the agents, and normalize that value to 1. Further, we will assume, also for simplicity of the presentation only, that $\sum_j C_j=n$, so a feasible solution to the matching problem will fully allocate all items.

\paragraph{Fair Allocation: Matching Market.}
We use the Fisher market proposed by \cite{HZ79} to make the allocation fair. Fisher market is defined by giving each agent a unit of (artificial) money. A market equilibrium is defined by a set of prices $\Price_{j} \in \PosReals$ for each item $j$. Given a set of prices, the agent $i$'s  optimization problem can be written as follows:

\begin{align*}
    &\text{maximize}   & \sum_j & \Val_{ij} \Alloc_{ij} \\
                    &\text{subject to}   & \sum_j \Alloc_{ij} & = 1 \tag{The matching constraint} \\
    &    & \sum_j \Alloc_{ij} \Price_j & \le 1 \tag{The budget constraint} \\
                    &   & \Alloc_{ij} & \ge 0 & &\forall j \in \RangeN{\NumItems}. \notag
\end{align*}

A \emph{market equilibrium} for this market is a set of prices $\Prices$, and a feasible allocation $\Allocss$ such that (i) $\Allocs_i$ is an
optimal solution to agent $i$'s optimization problem with respect to prices $\Prices$ and her budget of 1 unit of money. Note that the requirement of market clearing that all items are allocated, is automatically satisfied due to the matching constraint and our assumption that $\sum_j \ItemCount_j=\NumAgents$.

The matching constraint in the agents' preferences creates an odd demand structure. For some prices the agent's optimization problem is not feasible, and even on prices when all agent optimization problems are feasible, the preferences do not satisfy the gross substitute property, i.e increasing price of an item may increase the demand of that item, as explained in the introduction, and an example is shown in Appendix \ref{ap:examples}.

\etedit{A natural idea to convert the problem to one with a simpler structure is to allow agents to have free disposal, i.e., if assigned more than 1 unit of item, they value the best unit. Unfortunately, this change in the model significantly changes the structure of the problem, and can result in market equilibria that are simply not feasible for the original problem.}
Consider the following market with 2 agents and 2 items.

\begin{center}
\begin{tikzpicture}[marketstyle]
    \def\marketsize{2};
    \foreach \i in {1,...,\marketsize}
        \node["\AgentID{\i}"below,circle,fill] (agent_\i) at (\i,0) {};
    \foreach \j in {1,...,\marketsize}
        \node["\ItemID{\j}"above,circle,fill] (item_\j) at (\j,1) {};
    \draw(agent_1)
        edge["$2$"] (item_1)
        edge["$1$"] (item_2);
    \draw(agent_2)
        edge["$0$"] (item_1)
        edge["$1$"] (item_2);
\end{tikzpicture}
\end{center}
Let $v_1=(2,1)$ and $v_2=(0,1)$. Assume that the prices are $\Prices=(0.5,1.5)$. When agent's \etedit{have} free disposal, these prices are equilibrium, since $\Allocs_1=(1,\frac{1}{3})$ and $\Alloc_2=(0,\frac{2}{3})$ is optimal for both agents, and also clears the market. However, with matching market preferences these prices are not equilibrium, since $I_1$ will not get any of $A_2$ and $I_2$ cannot afford all of $A_2$ so the prices are not market clearing.

We note that it is possible to express this matching market problem as a classical Fisher market with agent preferences that are piece-wise linear and concave (general PLC) functions, though non-monotone and non-separable. To do this, we first relax the matching constraint in the agents' optimization problem to requiring only that $\sum_j \Alloc_{ij} \le 1$. We will show that the market clearing condition of a Fisher market requiring that all items are allocated (or have 0 price) will help to enforce that all agents get exactly 1 unit. We can also express the agents' utility as a piece-wise linear and concave function. To do this, let $\Val^*_i = \max_{j} \Val_{ij}+\epsilon$ where $\epsilon>0$. Let agent $i$'s utility for an allocation $\Allocs_i$
$$
\min(\sum_j \Val_{ij}\Alloc_{ij}, \sum_j \Val_{ij}\Alloc_{ij} + \Val^*_i(1-\sum_j\Alloc_{ij}).
$$
When $\sum_j \Alloc_{ij}\le 1$, the first term is smaller, but when $\sum_j \Alloc_{ij}\geq 1$ the minimum is taken by the second term, so the total value strictly decreases as the allocated amount exceeds 1. Since the agent's utility decreases
%\footnote{It is interesting to note that the non-monotone nature of the utility function is necessary for this formulation, as shown by an example in Appendix \ref{ap:examples}.}
 with more than one unit allocated, an optimal solution to the agent's optimization problem will allocate at most 1 unit of item to each agent, and hence if the market clears, this is also a solution to the matching market problem.

\cite{Devanur:2008} provides a polynomial time algorithm to find the equilibrium prices and allocation for the case with fixed number of goods. In Appendix \ref{Sec:FixedGoods} we give a simpler exact algorithm for this case, taking advantage of the matching structure. Our main result is to extend this to the case of fixed number of agents, instead of fixed number of items. \cite{Devanur:2008} also offers an algorithm for finding market clearing prices for with fixed number of agents, but only with \emph{separable} PLC utilities, i.e., when the utility of each agent is a separable function of the items allocated to the agents. Note that the utility function of a matching problem is necessarily not separable, as it needs to express the upper bound on the total allocation.

\paragraph{The Cell Decomposition Technique}
Our algorithms will search the space of optimal utility values of each agent, and for each possible utility value, will search through the possible structures of allocations. There are only a fixed number of agents, however, the space of possible optimal values is huge. We use a cell decomposition technique to make the search space discrete\etedit{, facilitated by a characterization of equilibria}.
%Furthermore, we characterize the equilibria, allowing us to find one in this space.
The beginning of Section \ref{Sec:FixedAgents} has a more detailed outline.% of our algorithm.

A main technical tool for our work will be the following theorem concerning the way polynomials divide the space in a $d$ dimensional space. Given a set of polynomials on $M$ variables, the sign of the polynomials define an equivalence between vectors in the $M$ dimensional space $\Reals^M$, where two vectors $y$ and $y'$ are equivalent if all polynomials have the same sign on $y$ and $y'$. We call the equivalence sets of this relations the \emph{cells} of the way the polynomials divide up the space. In principle $N$ functions can divide up the space into at many as $3^N$ cells (as each polynomial can be 0 positive or negative). However, Basu, Pollack, and Roy \cite{CellDecomp:98} showed that bounded degree polynomials in small dimensional spaces define much fewer cells.

\begin{theorem}\cite{CellDecomp:98} \label{THM:CellDecomp}
If we have a set of $M$ number of variables, and $N$ number of polynomials whose degree is at most $d$, then the number of non-empty cells, and the time required to enumerate them is $O(N^{M+1})d^{O(M)}$.
\end{theorem}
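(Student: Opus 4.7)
The plan is to attack this in two layers: a combinatorial bound on the number of realizable sign patterns, and an algorithmic enumeration procedure whose running time matches that bound.

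For the combinatorial count, I would build on the Oleinik--Petrovsky--Milnor--Thom inequality, which bounds the sum of Betti numbers (in particular the number $b_0$ of connected components) of a real algebraic set $\{P=0\}$ for $P$ of degree $D$ in $M$ variables by $D(2D-1)^{M-1}$. To apply this to sign patterns of $N$ polynomials $P_1,\ldots,P_N$, I would first perform a generic infinitesimal perturbation $P_i \to P_i \pm \varepsilon_i$ so that each non-empty sign condition is realized as a full-dimensional open region, and then put the arrangement in general position by a second perturbation. Each realizable strict sign pattern then corresponds to a connected component of $\Reals^M \setminus \bigcup_i \{P_i = 0\}$. Applying the Milnor--Thom bound to the product $\prod_i P_i$ (a single polynomial of degree $Nd$) yields $O((Nd)^M) = O(N^M) d^{O(M)}$ such components, well within the claim. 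The non-strict cells (where some $P_i = 0$) contribute only lower-order terms after a separate induction on the number of vanishing indices.

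For the algorithmic side, I would use the critical-point method. The goal is to produce at least one sample point in every non-empty cell by reducing to solving polynomial systems. One adjoins an auxiliary quadratic distance function whose restriction to each connected component attains its minimum at a nondegenerate critical point; these critical points are the zeros of an explicit polynomial system one can enumerate via resultant or subresultant computations. Processing the $N$ polynomials while maintaining a sign-refinement of the cells produced so far gives roughly $N$ outer iterations, each costing $d^{O(M)}$ work on intermediate data of size $O((Nd)^M)$, for a total of $O(N^{M+1}) d^{O(M)}$, matching the stated bound.

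The main obstacle is controlling degree growth across dimensions. A naive cylindrical algebraic decomposition would project polynomials one variable at a time, but the projection polynomials have degrees that grow doubly exponentially in $M$, destroying the claimed bound. The technical heart of Basu--Pollack--Roy, which I would follow, is to avoid explicit iterated projections and instead set up a single critical-point system per candidate sign condition, keeping per-variable degree growth linear in $d$ and per-polynomial branching linear in $N$. Handling the infinitesimal parameters $\varepsilon_i$ symbolically (in a Puiseux series extension of $\Reals$) and truncating at the end is the delicate ingredient that keeps the sample-point enumeration both sound and within the target complexity.
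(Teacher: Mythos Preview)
The paper does not prove this theorem at all: it is quoted from \cite{CellDecomp:98} and used purely as a black box. There is no ``paper's own proof'' to compare against; the authors simply invoke the result to bound the number of cells and the enumeration time in their algorithm. So your proposal is not comparable to anything in the paper --- you are sketching the proof of a cited external result that the paper never attempts to reprove.

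That said, as a sketch of the Basu--Pollack--Roy argument your outline is broadly accurate: the Milnor--Thom bound on $b_0$ via the product polynomial gives the combinatorial count, and the critical-point method with infinitesimal perturbations (rather than cylindrical algebraic decomposition, whose iterated projections blow up doubly exponentially) gives the singly-exponential enumeration. If your intent is to include a self-contained proof in the paper, this would be a substantial digression from the paper's focus on matching markets; if your intent is simply to justify citing the theorem, a one-line pointer to \cite{CellDecomp:98} suffices, which is exactly what the paper does.
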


%We will use this decomposition to find the equilibrium prices for our matching problem. To illustrate the idea, consider the version of the problem with a fixed number of items, and let $\Reals^M$ be the space of all possible prices. Roughly speaking the idea is as follows. If we could describe whether or not a set of prices $p \in \Reals^M$ are market clearing by the signs of a few bounded degree polynomials, then we could use Theorem \ref{THM:CellDecomp} to enumerate all cells defined by these polynomials, and test which of the cells satisfies the condition required for being market clearing. Unfortunately, the market clearing condition cannot be described by such small degree polynomials, so we will need to introduce additional variables to be able to carry out this plan.

\etedit{We will use this decomposition to find the equilibrium for our matching problem. To illustrate the idea, and let $\Reals^n$ be the space of all possible agent utilities. Roughly speaking the idea is as follows. If we could describe whether a set of utilities $\Util \in \Reals^n$ arises from an equilibrium by the signs of a few bounded degree polynomials in these variables, then we could use Theorem \ref{THM:CellDecomp} to enumerate all cells defined by these polynomials, and test which of the cells satisfies the condition required for being an equilibrium. Unfortunately, the equilibrium condition cannot be described this way, so we will need to introduce additional variables (helping us infer the prices and assignment, despite the fact that these are not in fixed dimensional space) to be able to carry out this plan.}

\section{Computing Market Equilibrium with Fixed Number of Agents}\label{Sec:FixedAgents}
In this section, we give an exact algorithm to find an equilibrium in the case where the number of agents $n$ is constant, and the number of different goods is an arbitrary number $m\in \mathbb{N}$, under the mild technical assumption that each agent has a unique most preferred item. More formally, for every agent $i\in [n]$, there is exactly one item $j$ such that $v_{ij}=\max_{k\in[m]}(v_{ik})$. The goal of this section is proving the following theorem.

\begin{theorem}\label{THM:FixedAgents}
Exact equilibrium (prices and allocations) in a matching market with fixed number of agents, in which agents have additive values one unit of money, and a unique most preferred item, can be found in polynomial time.
\end{theorem}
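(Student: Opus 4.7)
The plan is to encode equilibrium as a bounded-degree polynomial system on a constant-dimensional search space, then enumerate the combinatorial structure of equilibria via Theorem \ref{THM:CellDecomp}. Writing the KKT conditions for each agent's LP from Section \ref{sec:prelim} with Lagrange multipliers $\alpha_i$ (matching) and $\beta_i$ (budget), optimality gives $v_{ij}=\alpha_i+\beta_i p_j$ whenever $x_{ij}>0$ and $v_{ij}\le \alpha_i+\beta_i p_j$ otherwise. Since market clearing forces every item of positive price to be fully consumed by some agent, one obtains the closed form $p_j = \max\bigl(0,\,\max_i (v_{ij}-\alpha_i)/\beta_i\bigr)$: the whole price vector is a function of the $2n$ duals, and the search for equilibrium duals collapses to the space $\mathbb{R}^{2n}$, which has constant dimension because $n$ is fixed.

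On this space I consider the polynomial arrangement cut out by the degree-two polynomials
\[
P_{i i' j} := \beta_{i'}(v_{ij}-\alpha_i)-\beta_i(v_{i'j}-\alpha_{i'}),\qquad i,i'\in[n],\ j\in[m],
\]
together with the linear polynomials $v_{ij}-\alpha_i$ and $\beta_i$. The sign of $P_{ii'j}$ tells which of $i,i'$ bids more on item $j$; the remaining polynomials detect whether a bid is positive and whether an agent's budget is tight. This is $O(n^2m)$ polynomials of degree at most $2$ in $2n$ variables, so Theorem \ref{THM:CellDecomp} partitions $\mathbb{R}^{2n}$ into at most $O((n^2m)^{2n+1})\cdot 2^{O(n)} = \mathrm{poly}(m)$ sign-cells enumerable in the same time. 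Within each cell the candidate demand graph $G\subseteq[n]\times[m]$ of edges where an agent is a top bidder, the set of items with positive price, and the set of agents with tight budget are all combinatorially fixed.

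For each cell I then test whether an equilibrium lies inside by setting up a system in the duals, the prices $p_j$, and the allocation $x_{ij}$: the tight equalities $\beta_i p_j=v_{ij}-\alpha_i$ on edges of $G$, the matching equalities $\sum_j x_{ij}=1$, full-allocation $\sum_i x_{ij}=C_j$ on items with $p_j>0$, the budget equalities, $x_{ij}\ge 0$ and $x_{ij}=0$ off $G$, together with the sign conditions defining the cell. After the change of variables $\gamma_i := 1/\beta_i$ and $\delta_i := \alpha_i/\beta_i$ the KKT equalities become linear, $p_j = \gamma_i v_{ij}-\delta_i$ on every $(i,j)\in G$, and within each connected component of $G$ these equations propagate along a spanning tree, expressing every $(\gamma_i,\delta_i)$ and every price of the component as an affine function of a small (at most $O(n)$) set of free parameters. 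Substituting those parametrizations reduces the within-cell question to feasibility of a polynomial-size linear program in the allocation variables and the free duals; any feasible cell yields an exact equilibrium.

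The main obstacle is the bilinearity of the raw equilibrium conditions -- the products $\beta_i p_j$ in the KKT equations and $p_j x_{ij}$ in the budget constraints -- which rules out a direct LP formulation and is also the reason a naive polynomial encoding needs too many variables to fit into a constant-dimensional cell decomposition. Two ideas resolve this: keeping only the $2n$ duals as search variables and reading prices off the sign pattern brings the dimension down to $O(n)$, while the change of variables to $(\gamma_i,\delta_i)$ together with spanning-tree propagation linearizes all dual-to-price relations within each cell. The hypothesis that each agent has a unique most preferred item is used to rule out degenerate cells in which $\beta_i=0$ for some agent, which would make the substitution ill-defined and introduce ambiguity in tie-breaking among that agent's top items.
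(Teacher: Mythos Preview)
Your overall architecture---encode equilibrium via KKT duals, reduce the search space to $\mathbb{R}^{2n}$, and invoke the cell decomposition of Theorem~\ref{THM:CellDecomp}---is sound and is in fact close in spirit to the paper's approach (the paper's variables $u_i$ and its ``price of one special item per agent'' are essentially your $\alpha_i$ and a proxy for $\beta_i$). The change of variables $(\gamma_i,\delta_i)=(1/\beta_i,\alpha_i/\beta_i)$ that linearizes $p_j=\gamma_i v_{ij}-\delta_i$ on demand edges is also correct and mirrors the paper's Lemma~\ref{Lem:Pricefromfew}.

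The gap is in your per-cell test. You assert that, after spanning-tree propagation, ``the within-cell question'' becomes ``feasibility of a polynomial-size linear program in the allocation variables and the free duals.'' But the budget constraints $\sum_j p_j x_{ij}=1$ are bilinear in the allocation variables $x_{ij}$ and the remaining free dual parameters (since each $p_j$ is only \emph{affine} in those parameters, not fixed). Your two linearization ideas handle the products $\beta_i p_j$ in the stationarity conditions, but they do nothing for the products $p_j x_{ij}$ in the budget equations; the resulting system is a bilinear feasibility problem, not an LP. You cannot push the $x_{ij}$ into the cell decomposition either, because there are $\Theta(nm)$ of them and the dimension must stay $O_n(1)$.

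This is precisely the obstacle the paper spends most of its effort on. Its resolution is a structural lemma (Lemma~\ref{Lem:ReBundling}) showing that some equilibrium allocation has at most $O(n^2)$ items that are \emph{shared} between agents; all other $x_{ij}$ are then determined combinatorially from the cell data. Those $O(n^2)$ shared allocations are added as extra cell-decomposition variables, after which each budget constraint becomes a bounded-degree polynomial in a constant number of variables and can be checked by sign. Your proposal is missing an analogue of this lemma; without a bound on the number of free allocation variables, the per-cell check is not polynomial. (A secondary point: the unique-most-preferred-item assumption does not rule out $\beta_i=0$---an agent who receives only her favorite item at price below $1$ has slack budget---so that case needs to be enumerated separately rather than dismissed.)
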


\textbf{General Outline and Techniques. }
Our algorithm searches the space of agents' optimal utilities and item prices to find an equilibrium. We divide this space into a polynomial number of cells, where each cell contains utility and price vectors that have the same properties. We use Theorem \ref{THM:CellDecomp} as the basis of the cell decomposition. The space of possible agent utilities is finite dimensional. However, since the number of items is not constant, we cannot use a separate variable for each item price. In section \ref{Sec:AgentsPriceBundleChar}, we provide a bundling technique and a characterization of the equilibrium structure that allow us to define equilibria using only a finite set of variables.

\textbf{Cell. } Consider the vector of player utilities $\Utils=\{u_i\}$, a constant dimensional space for fixed number of agents. Now consider the linear functions $u_i-v_{ij}$ for each agent $i$ and item $j$. A cell of the space of utilities $\Utils$ defined by these functions is the region of this space in which each of these functions has a fixed sign. Within each region, the items are divided for each agent $i$ into those with value above $u_i$, same as $u_i$, and below $u_i$. This division also has implications on prices: if the utilities are part of an equilibrium, the price $\Price_j$ of any item $j$ with value $v_{ij}>u_i$ will have to be above $1$. We will add further variables and polynomials, until each cell provides enough information for checking all the equilibrium properties.

\textbf{Layered Cell Decomposition. } Next we would like to add the item prices as variables. However, to keep the running time polynomial when using Theorem \ref{THM:CellDecomp}, we can only have a constant number of variables, so we cannot afford a variable for each item price. To get around this problem we will try a polynomial number of different structures for the price vector, where for any fixed structure, we can define all item prices via a fixed number of variables. To do this, for each agent we will fix a special item that is at least partially allocated to the agent.  Lemma \ref{Lem:Pricefromfew} will show that given prices for the fixed number of special items, we can infer prices for all items.

Finally, we also need to be able to find the assignment variables. We will show in Lemma \ref{Lem:ReBundling} that each pair of agents only shares a few items (at most 5), and given the set of shared items, as well as the utilities and item prices, the  allocation can be fully determined. Our algorithms iterates over all structures of specially assigned items and shared sets of items. For each of these structures, the algorithm iterates over all cells of the cell-decomposition given by agent utilities and prices of special items and the constraint (polynomials) described in the next subsections that ensure that these describe an equilibrium, and finds the ones which correspond to equilibria.\footnote{Recall that by \cite{HZ79} (see also Appendix \ref{sec:existence}) an equilibrium must exists.}

\textbf{Bundles. } Rather than thinking about individual items in isolation, it is useful to think of items in pairs. In equilibria each agent spends exactly one unit of money and gets exactly a total of one unit of items. This means that in equilibrium, if an agent gets some amount an item whose unit price is less than one, she should also get some amount of an item whose unit price is more than one. As suggested by this fact, we pair items of price below and above 1. We define a \emph{bundle} as either a single item of price 1, or fractions of two items of a total of one unit of items, where the total unit price of the bundle is exactly 1. First, we show that in a market equilibrium, the allocations of items to agents, can be rewritten as the allocation of bundles to agents.

\begin{lemma}\label{Lem:Bundling}
In an equilibrium pricing $\Prices$ with equilibrium allocation $\Allocss$, there exists a bundling $B$ of items such that
\begin{enumerate}[a.]
\setlength{\itemsep}{0pt}\setlength{\parsep}{0pt}\setlength{\parskip}{0pt}
\item Each bundle $b$ consists of at most two items. Two items $j$ and $k$ are in a bundle $b=(j,k)$ if and only if $p_j<1<p_k$. One item $j$ forms a bundle $b=(j)$ if and only if $p_j=1$.
\item Each bundle $b=(j,k)$, associated with a unique mix $0< \alpha_b < 1$ such that $\alpha_b p_j + (1-\alpha_b) p_k =1$ (recall that $p_j<1<p_k$). For bundles with one item we use $\alpha_b=1$.
    %We call $p_b$ the price of bundle $b$.
\item The optimum allocation of items to agent $i$ satisfying the matching constraint can be rewritten as allocation of bundles.
\end{enumerate}
\end{lemma}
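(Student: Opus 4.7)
The plan is to reduce the lemma to an algebraic balance identity and then construct the bundling by a greedy pairing procedure. First, I would derive the key identity: in equilibrium each agent spends her full unit of money, so the budget constraint is tight, $\sum_j p_j x_{ij} = 1$; combined with the matching constraint $\sum_j x_{ij} = 1$, subtracting yields
\[
\sum_{j\in J_<}(1-p_j)\,x_{ij} \;=\; \sum_{k\in J_>}(p_k-1)\,x_{ik},
\]
where $J_<, J_=, J_>$ partition the items by whether $p_j$ is less than, equal to, or greater than $1$. Parts (a) and (b) would then follow by elementary algebra: solving $\alpha_b p_j + (1-\alpha_b)p_k = 1$ with $p_j<1<p_k$ uniquely gives $\alpha_b = (p_k-1)/(p_k-p_j) \in (0,1)$, and singletons with $p_j=1$ are trivially handled with $\alpha_b = 1$.

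For part (c), I would build a bundle allocation $\{y_{ib}\}$ for each agent separately. First, set $y_{i(j)} := x_{ij}$ for each $j \in J_=$, turning those items directly into singleton bundles. Then apply a greedy pairing on $J_<\cup J_>$: while some $j \in J_<$ and $k \in J_>$ both have positive remaining allocation $\tilde x_{ij}, \tilde x_{ik}$, form the bundle $b = (j,k)$, set $y_{ib} := \min(\tilde x_{ij}/\alpha_b,\; \tilde x_{ik}/(1-\alpha_b))$, and subtract $y_{ib}\alpha_b$ from $\tilde x_{ij}$ and $y_{ib}(1-\alpha_b)$ from $\tilde x_{ik}$. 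Since each iteration zeros at least one of these, the procedure terminates in at most $|J_<| + |J_>|$ steps.

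The main technical point will be verifying that this greedy procedure exhausts both $J_<$ and $J_>$ remnants simultaneously, so that the bundle allocation reproduces the full $x_{ij}$. The key invariant is that every step subtracts the same amount from both the ``remaining discount'' $D := \sum_{j\in J_<}(1-p_j)\tilde x_{ij}$ and the ``remaining premium'' $P := \sum_{k\in J_>}(p_k-1)\tilde x_{ik}$; a direct computation from the closed form for $\alpha_b$ shows each bundle assignment drops both by $y_{ib}\,(1-p_j)(p_k-1)/(p_k-p_j)$. Since the balance identity gives $D=P$ initially, equality persists and both reach zero together; strict positivity of $(1-p_j)$ and $(p_k-1)$ on their respective sides then forces all remaining allocations to vanish. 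This yields $\sum_{b \ni j} y_{ib}\alpha_b = x_{ij}$ for every $j$, while $\sum_b y_{ib} = 1$ follows from the matching constraint since each bundle carries exactly one unit of items.

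The most delicate input is budget-tightness in equilibrium---this fails for arbitrary Fisher markets but is asserted just above the statement in this matching setting. Beyond that, the argument is a clean transportation-style pairing, so I do not expect any serious obstacle.
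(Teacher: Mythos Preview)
Your proposal is correct and follows essentially the same approach as the paper: both rely on budget tightness $\sum_j p_j x_{ij}=1$ together with the matching constraint, and both construct the bundle allocation by the same greedy pairing that at each step zeros out one item's remaining share. Your invariant $D=P$ is equivalent to the paper's observation that the remaining total allocation and remaining total cost decrease by the same amount $z$ at each step; your version is a bit more explicit about why the procedure exhausts both sides simultaneously, but the argument is the same.
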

\begin{proof}
Let $\Prices$ be an equilibrium pricing and $\Allocss$ be an arbitrary allocation associated with $\Prices$. For each agent $i$, we know that the total allocation of items to $i$ and the the total cost of $i$'s items is 1 (due to the market clearing condition), i.e.
$$\sum_{j\in[m]}\Alloc_{ij}=1$$
and
$$\sum_{j\in[m]}p_{j}\Alloc_{ij}=1$$
We start rewriting $i$'s allocation with bundles. Let $y_{ib}$ be the amount of bundle $b$ that $i$ uses in the new allocation ($y_{ib}$ is zero at the beginning). At each step we consider the following cases
\begin{enumerate}
\item For every $j$ that $\Alloc_{ij}>0$, we have $\Price_j=1$: in this case all such items should be in a bundle $b=(j)$ ($y_{ib}=x_{ij}$), so our claim is correct.
\item There exists $j$ that $\Alloc_{ij}>0$ and $\Price_{j}>1$: This means that there exists another item $j'$ such that $\Price_{j'}<1$ and $\Alloc_{ij}>0$, otherwise if $i$ gets $t$ unit of items, her total cost cannot be $t$ (note that $t=1$ here). So by the definition, $j$ and $j'$ should be in a bundle $b=(j,j')$.
\end{enumerate}

Let $z=\min(\frac{x_{ij}}{\alpha_b},\frac{\Alloc_{ij'}}{1-\alpha_b})$. In the second case, we increase $y_{ib}$ by $z$, and reduce $\Alloc_{ij}$ by $z\alpha_b$ and $\Alloc_{ij'}$ by $z(1-\alpha_b)$. This means that both the total cost of remaining allocation of $\Allocss$ and the total remaining items in allocation of $\Allocss$ decrease by $z$. Note that by doing this the total allocation of $i$ (counting her allocation in $\Allocs_i$ and $y_i$) does not change. Furthermore, either $\Alloc_{ij}$ or $\Alloc_{ij'}$ becomes zero.

If we repeat this process, $y_i$ gives us a way to rewrite the allocation of items to $i$ as an allocation of some bundles ($B_i$) to $i$. If we repeat this for all the agents, we get what we want.
\end{proof}

For each agent $i$, the value of a bundle $b=(j,k)$ is $v_{ib}=\alpha_b v_{ij} + (1-\alpha_b) v_{ik}$, while the value of a single item bundle $b=(j)$ is just the value $v_{ib}=v_{ij}$. Let $B_i\subseteq B$ the set of bundles of maximum value for agent $i$ (called $i$'s optimum bundles).

\begin{corollary}\label{Lem:Bundlingd}
The optimum allocation of items to agent $i$ is any allocation of bundles using only bundles in $B_i$. Furthermore, utility of $i$ in the equilibrium is $u_i=v_{ib}$ for the bundles $b\in B_i$.
\end{corollary}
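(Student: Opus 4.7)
The plan is to apply Lemma \ref{Lem:Bundling} to move the analysis from the item space into the bundle space, where the agent's optimization collapses to a one-dimensional convex combination. By that lemma, every equilibrium allocation $\Allocs_i$ admits a representation as a bundle allocation $y_i$, so it suffices to characterize the optimum of agent $i$'s problem when it is rewritten over bundles.

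The key observation I would exploit is that the bundles have been designed to normalize both size and cost simultaneously. A two-item bundle $b=(j,k)$ contains $\alpha_b$ units of $j$ and $1-\alpha_b$ units of $k$, summing to one unit of items; by definition $\alpha_b p_j + (1-\alpha_b) p_k = 1$, so its cost is also one. A single-item bundle $b=(j)$ has $p_j=1$ and consists of exactly one unit. Consequently, in the bundle representation, the matching constraint $\sum_j \Alloc_{ij}=1$ and the budget constraint $\sum_j \Alloc_{ij} p_j \le 1$ collapse into the single requirement $\sum_{b\in B} y_{ib} = 1$ with $y_{ib}\ge 0$, and the objective $\sum_j v_{ij}\Alloc_{ij}$ becomes $\sum_{b\in B} v_{ib} y_{ib}$.

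From there, agent $i$'s optimization is simply the maximization of a linear function over the unit simplex indexed by $B$, whose optimum is attained exactly on the face supported by the bundles of maximum value, i.e.\ on $B_i$. Hence any distribution $y_i$ supported on $B_i$ with $\sum_b y_{ib}=1$ is optimal, and the resulting utility equals $v_{ib}$ for any $b\in B_i$ (all such bundles share the common maximum value by definition of $B_i$); this gives $u_i = v_{ib}$.

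The one subtle point worth checking explicitly is the converse direction — that every optimal allocation uses \emph{only} bundles in $B_i$. I would handle it by a swap argument: if the bundle representation of some optimum $\Allocs_i$ put positive weight on a bundle $b'\notin B_i$, then shifting that weight to any $b\in B_i$ would strictly increase $\sum_b v_{ib} y_{ib}$ while preserving $\sum_b y_{ib}=1$ and $y_{ib}\ge 0$, contradicting the assumed optimality of $\Allocs_i$ in agent $i$'s equilibrium problem. I expect no real obstacle here — the main work was done inside Lemma \ref{Lem:Bundling}, and once the bundle normalization is in place this corollary is essentially a one-line consequence of linearity.
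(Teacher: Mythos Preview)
Your proposal is correct and follows essentially the same approach as the paper: both arguments rely on the fact that every bundle has unit size and unit price, so swapping weight from a suboptimal bundle to a maximum-value bundle preserves feasibility while strictly improving the objective. The paper phrases this directly as a trade/contradiction argument, while you frame it as linear optimization over the bundle simplex before invoking the same swap; the content is identical.
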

\begin{proof}
We prove this by contradiction. Assume that $\mathbf y$ is the allocation of bundles to agents. Also assume that there exists an agent $i$ and a bundle $b$ for which $y_{ib}>0$ but there exists another bundle $b'$ such that $v_{ib}<v_{ib'}$. Since the unit price of the bundles is $1$, and there is one unit of items in them, $i$ can trade her share of $b$ for the same amount of $b'$ and increase her value. This is a contradiction.

Now since the unit values of bundles that $i$ uses are the same, and $i$ gets exactly one unit of bundles, her utility in equilibrium is equal to the unit value of these bundles. So the second claim is also true.
\end{proof}

A key observation for using the bundles to define prices in Lemma \ref{Lem:Pricefromfew} is the following exchange property of optimal bundles.

\begin{lemma}\label{Lem:Bundlinge}
If $b=(j,k)$ and $b'=(j',k')$ are optimum bundles of agent $i$ (are in $B_i$), such that $p_j,p_{j'}<1<p_k,p_{k'}$, then $b''=(j,k')$ and $b'''=(j',k)$ are also in $B_i$.
\end{lemma}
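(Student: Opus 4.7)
The approach is to recognize that each agent $i$'s optimization --- maximize $\sum_\ell v_{i\ell} x_\ell$ subject to the matching constraint $\sum_\ell x_\ell = 1$, the budget constraint $\sum_\ell p_\ell x_\ell \le 1$, and $x \ge 0$ --- is a linear program whose optimal value \corref{Lem:Bundlingd} identifies as $u_i$. The bundles $b = (j,k)$ and $b' = (j',k')$, interpreted as allocations with $x_j = \alpha_b,\ x_k = 1-\alpha_b$ and $x_{j'} = \alpha_{b'},\ x_{k'} = 1-\alpha_{b'}$, are therefore both optimal primal solutions; since $p_j, p_{j'} < 1 < p_k, p_{k'}$ forces $\alpha_b, \alpha_{b'} \in (0,1)$, all four items $j, k, j', k'$ appear with strictly positive weight in the supports of these optimal primals.

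By LP strong duality and complementary slackness, for any optimal dual pair $(\lambda^*, \mu^*)$ (for which $\lambda^* + \mu^* = u_i$) and any optimal primal, each item $\ell$ in the primal's support must satisfy $v_{i\ell} = \lambda^* + \mu^* p_\ell$. Applied to $b$ this gives $v_{ij} = \lambda^* + \mu^* p_j$ and $v_{ik} = \lambda^* + \mu^* p_k$, which (using $p_j \ne p_k$) uniquely determines $(\lambda^*, \mu^*)$. Applied to the other optimal primal $b'$ against the \emph{same} dual, we similarly obtain $v_{ij'} = \lambda^* + \mu^* p_{j'}$ and $v_{ik'} = \lambda^* + \mu^* p_{k'}$. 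All four items thus lie on the single affine line $v = \lambda^* + \mu^* p$ in the price--value plane.

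The conclusion follows from a one-line computation. The bundle $b'' = (j,k')$ has mixing weight $\alpha_{b''}$ characterized by $\alpha_{b''} p_j + (1-\alpha_{b''}) p_{k'} = 1$, hence
$$v_{ib''} = \alpha_{b''} v_{ij} + (1-\alpha_{b''}) v_{ik'} = \lambda^* + \mu^* \bigl(\alpha_{b''} p_j + (1-\alpha_{b''}) p_{k'}\bigr) = \lambda^* + \mu^* = u_i,$$
and the identical argument for $b''' = (j',k)$ gives $v_{ib'''} = u_i$. Both cross-bundles therefore attain the optimum and lie in $B_i$.

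The step I expect to require the most care is the assertion that a \emph{single} dual $(\lambda^*, \mu^*)$ is compatible with both optimal primals $b$ and $b'$, because optimal duals of an LP need not be unique. The resolution is that complementary slackness holds for every pair consisting of an optimal primal and an optimal dual, so one can fix any optimal dual --- for instance, the one determined by the two tight constraints from $b$'s support --- and then $b'$'s optimality automatically forces tightness along its support too. As a backup avoiding LP duality entirely, one can pick weights $\sigma \in (0,1)$ and $\tau \in (0,1)$ so that $\sigma b + (1-\sigma) b'$ coincides as an item vector with $\tau b'' + (1-\tau) b'''$; linearity of value then yields $\tau v_{ib''} + (1-\tau) v_{ib'''} = u_i$, and since $u_i$ upper-bounds every bundle's value, both terms must individually equal $u_i$.
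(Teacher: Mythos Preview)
Your proof is correct, and the LP-duality argument is a genuinely different route from the paper's. The paper argues by contradiction: it takes a mixture of $b$ and $b'$ so that both $j$ and $k'$ carry positive mass, invokes the rebundling procedure of \lemref{Lem:Bundling} to extract a positive amount of $b''=(j,k')$ from that mixture, and then observes that if $v_{ib''}<u_i$ the resulting bundle decomposition would contain a strictly suboptimal bundle, contradicting \corref{Lem:Bundlingd}. Your main argument instead passes through the dual: any optimal dual $(\lambda^*,\mu^*)$ forces every item in the support of \emph{any} optimal primal to lie on the line $v=\lambda^*+\mu^* p$, and then the value of either cross-bundle is $\lambda^*+\mu^*\cdot 1=u_i$ by a direct computation. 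This buys you a clean geometric picture (an ``indifference line'' in the price--value plane) and an immediate generalization: any bundle formed from any two items that ever appear in the support of some optimal allocation is itself optimal. The paper's argument, by contrast, stays entirely inside the bundling framework already set up and avoids introducing LP duality explicitly. Your backup argument---exhibiting $\sigma,\tau\in(0,1)$ with $\sigma b+(1-\sigma)b'=\tau b''+(1-\tau)b'''$ and then using $v_{ib''},v_{ib'''}\le u_i$---is in spirit the paper's idea but made direct rather than by contradiction; the required $\sigma,\tau$ do always exist (one can verify the two consistency equations for $\sigma/(1-\sigma)$ coincide using the explicit formulas $\alpha_{(\cdot)}=(p_{\text{high}}-1)/(p_{\text{high}}-p_{\text{low}})$), so that sketch also goes through.
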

\begin{proof}
We prove this lemma by using contradiction. W.l.o.g assume that $b_3 \notin B_i$. If in equilibrium $i$ trade her bundles to get $y_{ib_1}>0$ unit of $b_1$ and $y_{ib_2}>0$ unit of $b_2$, then by lemma \ref{Lem:Bundlingd}, her value will not change. Now, since $\Alloc_{ij}>0$ and $\Alloc_{ik'}>0$, similar to proof of lemma \ref{Lem:Bundling}, we can rewrite her allocation so that it includes some of $b_3$, and similarly rewrite the remaining allocation of $i$ with other bundles. Since the unit value of $i$ for $b_3$ is less than value of $i$ in equilibrium, this is a contradiction with corollary \ref{Lem:Bundlingd} and we are done.

\end{proof}

%In our algorithm, beside using the previous techniques, we also use brute-force twice to find an assignment of items to agents. The first brute-force is for fixing an item which is in one of the optimum bundles of each agent. The second brute-force is for finding the items that each pair of agents are sharing. For the second brute-force, we prove that the number of these shared items is at most 5. Since, the number of agents is assumed to be constant, the running time of these brute-force sub-algorithms are polynomial ($O(m^n)$ for the first, and $O(m^{5n^2})$ for the second one).

%We also generalize the definition of cell so that it includes an outcome of an iteration of theorem \ref{THM:CellDecomp} and brute-forces, i.e. sign of each of the defined polynomials, a selection of each of our brute-forces. Our algorithms iterates all the feasible cells and finds the ones which contain equilibria.

\subsection{Characterizing the Prices and Optimum Bundles with Polynomials}\label{Sec:AgentsPriceBundleChar}
In this subsection we define a set of variables and polynomials that \etedit{help us determine agent utilities and prices of all items at equilibrium. We consider assignments in the next subsection}. %are the base of our algorithm.
%Ideally, we wish to define the price vector with a set of variables, but the number of items is not constant. Instead, we define a set of variables for agents utilities and use it to formulate the prices.
%The purpose of these variables is to find a set of \textit{candidate bundles}. The value of each of these candidate bundles is optimum for at least one agent, but their price is not guaranteed to be 1. Note that the only difference between candidate bundles and optimum bundles of an agent is that the unit price of optimum bundles should be exactly 1.

For each agent $i$, we define a variable $u_i$ which is $i$'s utility in the equilibrium.
By Lemma \ref{Lem:Bundling}, we know that any equilibrium allocates bundles to agents, where each agent only gets one unit of her optimum bundles $B_i$. %Note that agents have no utility for money, and due to corollary \ref{Lem:Bundlingd}, we know that value of $i$ for her optimum bundles should also be $u_i$.
Since we did not define variables for the %do not know have access to
prices yet, we cannot use prices to define bundles, so we start by defining a set of item bundles for each agent just based on the fact from Corollary \ref{Lem:Bundlingd} that optimum bundles must give value $u_i$.

\textbf{Candidate Bundles.} For each agent $i$, we define a \emph{candidate bundle} to be the items whose value is $u_i$, or the pair of items $j$ and $k$ such that $v_{ij}<u_i<v_{ik}$, so there exists a unique $0<\alpha^i_{jk}<1$ such that $\alpha^i_{jk}v_{ij} + (1-\alpha^i_{jk})v_{ik}=u_i$. Note that the optimum bundles of $i$ also satisfy this constraint (by Corollary \ref{Lem:Bundlingd}). This means that the optimum bundles of an agent is a subset of her candidate bundles. In addition, the price of optimum bundles is exactly 1.

In order to find the set of candidate bundles of agent $i$, we  define a polynomial $v_{ij}-u_i$ for each agent $i$ and item $j$. This way, each cell tells us for each item $j$, whether $v_{ij}<u_i$, $v_{ij}=u_i$ or $v_{ij}>u_i$.
For any two items $j,k\in[m]$, $j$ and $k$ form a candidate bundle if $v_{ij}<u_i$ and $v_{ik}>u_i$. Similarly if for an item $j$, if $u_i=v_{ij}$ then $j$ alone forms a candidate bundle. By the information provided by each cell, for each agent $i\in[n]$ and item $j,k\in [m]$ that form a candidate bundle, we define the ratio for the candidate bundle to be $\alpha^i_{jk}=\frac{u_i-u_k}{u_j-u_k}$.

Not all the candidate bundles of agent $i$ are in the set of her optimum bundles, since the price of optimum bundles should be exactly 1. We first observe that the unit price of a candidate bundle cannot be less than 1. This property of candidate bundles is useful in proof of lemma \ref{Lem:Pricefromfew}, allowing us to infer all prices from prices on only a few items.

\begin{lemma}\label{Lem:CanBunPrice}
In an equilibrium with prices $\Prices$, % and allocation $\Allocss$,
%there is no candidate bundle whose price is less than $1$.
\etedit{all candidate bundles have price at least $1$}.
\end{lemma}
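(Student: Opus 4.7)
My plan is a perturbation-by-contradiction argument built on Corollary~\ref{Lem:Bundlingd}. I would assume some candidate bundle $b$ of agent $i$ has equilibrium price $p_b < 1$; since a candidate bundle by definition delivers unit value $u_i$ (agent $i$'s equilibrium utility), I will exhibit a feasible allocation for $i$ that strictly beats $u_i$, contradicting the optimality of $u_i$ at prices $\Prices$ guaranteed by Corollary~\ref{Lem:Bundlingd}.

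The generic two-item case is $b = (j,k)$ with $v_{ij} < u_i < v_{ik}$. I would take the alternative allocation that gives $i$ a fraction $(1-\epsilon)$ of bundle $b$ together with $\epsilon$ units of the single item $k$, for a small $\epsilon > 0$. This totals exactly one unit of items, has utility $(1-\epsilon)u_i + \epsilon v_{ik}$, which strictly exceeds $u_i$ because $v_{ik} > u_i$, and costs $(1-\epsilon)p_b + \epsilon p_k$, which stays below $1$ for sufficiently small $\epsilon$ by continuity since $p_b < 1$. That yields the contradiction.

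The single-item candidate case $b = (j)$ with $v_{ij} = u_i$ uses the unique-most-preferred-item assumption of Theorem~\ref{THM:FixedAgents}. Let $m^*$ denote $i$'s unique favorite item. If $j \neq m^*$ then $v_{im^*} > v_{ij} = u_i$ and the analogous perturbation (shift $\epsilon$ mass from $j$ to $m^*$) raises utility above $u_i$ while keeping cost below $1$, giving the contradiction as before. Otherwise $j = m^*$ with $u_i = v_{im^*}$; here uniqueness of the maximum forces $i$'s equilibrium allocation to be exactly one unit of $m^*$, and then invoking the full-budget property used inside the proof of Lemma~\ref{Lem:Bundling} (where $\sum_j p_j \Alloc_{ij} = 1$ is asserted for every agent) pins $p_{m^*} = 1$, contradicting $p_b < 1$.

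The hard part will be that last subcase: no local substitution can strictly raise $i$'s utility when the candidate bundle is already $i$'s global favorite singleton, so the argument there has to lean on the tight-budget equilibrium convention implicit elsewhere in the paper rather than on a fresh perturbation idea. I expect the main body of the proof to focus on the two cleanly handled cases and dispatch the degenerate case by a short appeal to this normalization, noting that it was already in force when agent utilities were identified with bundle values in Lemma~\ref{Lem:Bundling} and Corollary~\ref{Lem:Bundlingd}.
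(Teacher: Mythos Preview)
Your proposal is correct and mirrors the paper's own proof almost exactly: both argue by contradiction via a small perturbation that keeps total allocation one and cost below one while strictly raising utility, and both split into the same three cases (two-item candidate bundle, singleton $\neq$ favorite, singleton $=$ favorite) with the last case handled by the market-clearing/tight-budget convention already used in Lemma~\ref{Lem:Bundling}. The only cosmetic difference is your choice of perturbation direction in the two-item case---you shift mass toward the high-value item $k$ inside the bundle, whereas the paper shifts toward the agent's globally unique favorite item---but both directions work for the same continuity reason.
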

\begin{proof}
We prove this by contradiction. Assume that there exists an agent $i$ such that the price of one her candidate bundles $b$ is less than one. Assume that $i$'s most preferred item is $j$. We have the following cases
\begin{itemize}
\item $b=(z)\neq (j)$: In this case, since $i$'s most preferred item is unique, there exists an $\epsilon>0$ such that if $i$ gets $1-\epsilon$ unit of item $z$ and $\epsilon$ unit of item $j$, her total price is still less than 1, but her utility is more than $v_{iz}$. This is a contradiction since by the definition of candidate bundles, $i$'s utility in equilibrium cannot be more than $v_{iz}$.
\item $b=(j)$: In this case, we claim that $\Allocss$ does not allocate any item to $i$ other than $j$. Assume this is not true. If $i$ trades whatever she gets in the equilibrium with $j$, her price will be less than 1, and since the maximum value for $i$ is unique, her utility will increase. So it should be the case that $i$ does not spend all her money and market will not be cleared. This is a contradiction with the assumption that the prices are market clearing.
\item $b=(k,z)$: Since the maximum value item of $i$ is unique, her value $b$ should be less than $v_{ij}$. Now since the unit price of $b$ is less than one, there exists $\epsilon>0$ such that $\epsilon p_j + (1-\epsilon) \Price_b <1$. However, if $i$ gets $\epsilon$ from item $j$ and $1-\epsilon$ from bundle $b$, his utility is more than $v_{ib}$. This is a contradiction with the definition of candidate bundles.
\end{itemize}

So all the possible cases reach a contradiction and we are done.
\end{proof}
%\textbf{Proof Sketch:} If in an equilibrium there was a candidate bundle whose price is less than 1, then the agent can trade some of his items \etedit{with any item of value above $u_i$, and this cheap bundle, and increase his utility without getting more than 1 unit of item or spending more than 1 unit of money. If agent $i$ only gets her most preferred item, that item must have price 1 by our assumption that the most preferred item is unique,}  which is a contradiction. \hfill $\square$

%\subsection{Characterizing the Prices and Optimum Bundles with Polynomials}\label{Sec:AgentsPriceBundleChar}

Next we wish to find the prices for all items. We will show that if one knows for each agent her utility, the price of only one item in her optimum bundles, and we use the set of candidate bundles defined above, we can find a the price of all the items which are in one of her optimum bundles.

\begin{lemma}\label{Lem:Pricefromfew}
Consider an equilibrium where we know for each agent $i$, the utility $u_i$ of the agent, and the price of a single item $j$ which is in a bundle of $B_i$ with two items. Using these values, and the notion of candidate bundles defined above, we can find the price of all items in polynomial time.
\end{lemma}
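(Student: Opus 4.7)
The plan is to determine all prices in two rounds of inference, combining the exchange property from Lemma~\ref{Lem:Bundlinge} with the lower bound on candidate-bundle prices from Lemma~\ref{Lem:CanBunPrice}. For each agent $i$, since $j_i$ lies in a 2-item bundle of $B_i$ we have $v_{ij_i}\neq u_i$; from the given $u_i$ I would partition the items into $L_i=\{k:v_{ik}<u_i\}$, $H_i=\{k:v_{ik}>u_i\}$, and $E_i=\{k:v_{ik}=u_i\}$, which lets us detect the side of $j_i$. The correctness of each round will rest on the observation that the maximum of a collection of valid lower bounds on $p_k$ is tight whenever at least one of them arises from an actual optimum bundle that contains $k$.

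In Round~1 I would handle items on the opposite side of $u_i$ from $j_i$. For each such pair $(i,k)$ the items $j_i$ and $k$ form a candidate bundle, so Lemma~\ref{Lem:CanBunPrice} supplies a lower bound $\hat p^i_k$ on $p_k$ obtained by solving $\alpha^i_{j_ik}p_{j_i}+(1-\alpha^i_{j_ik})p_k=1$, and this lower bound is tight iff $(j_i,k)\in B_i$. For $k\in E_i$ I take $\hat p^i_k:=1$ from the singleton candidate bundle. Setting $p^{(1)}_k:=\max_i\hat p^i_k$ recovers $p_k$ exactly whenever $k$ lies in some optimum bundle of some $i^*$ with $k$ on the opposite side of $u_{i^*}$ from $j_{i^*}$: Lemma~\ref{Lem:Bundlinge}, applied to $(j_{i^*},k_{i^*})\in B_{i^*}$ and the optimum bundle containing $k$, forces $(j_{i^*},k)\in B_{i^*}$, so $\hat p^{i^*}_k=p_k$ and the maximum is attained; singleton optimum bundles in $E_{i^*}$ are handled analogously.

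In Round~2 I address the remaining items $k$, namely those on the same side of $u_i$ as $j_i$ for every agent $i$ whose optimum bundles contain $k$. For each agent $i$ and each item $k'$ on the opposite side of $u_i$ from $k$ whose price was fixed in Round~1, $(k,k')$ is a candidate bundle and Lemma~\ref{Lem:CanBunPrice} again furnishes a lower bound on $p_k$ via the bundle-price equation; I set $p^{(2)}_k$ to the maximum of these over all such $(i,k')$. The exchange property again secures tightness: if $(k,k'')\in B_{i^*}$ is an optimum bundle containing $k$, then since $k$ is on the same side of $u_{i^*}$ as $j_{i^*}$ the partner $k''$ is on the opposite side, and combining with $(j_{i^*},k_{i^*})\in B_{i^*}$ gives both $(j_{i^*},k'')\in B_{i^*}$ (so $p_{k''}$, and in particular $p_{k_{i^*}}$, was correctly determined in Round~1) and $(k,k_{i^*})\in B_{i^*}$, so the pair $(i^*,k_{i^*})$ in the Round-2 maximum attains $p_k$.

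Since $\sum_j C_j=n$ every item with positive supply is allocated in equilibrium and therefore lies in some optimum bundle, so the two rounds together price every item; each round performs $O(nm^2)$ arithmetic operations, which is polynomial. The main obstacle is verifying that each maximum equals the true equilibrium price rather than merely lower-bounding it, and this is precisely what the exchange property of Lemma~\ref{Lem:Bundlinge} supplies in the two cases above.
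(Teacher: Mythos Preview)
Your approach matches the paper's: propagate prices in two hops from each agent's special item via candidate bundles, with Lemma~\ref{Lem:CanBunPrice} making each proposal a lower bound and Lemma~\ref{Lem:Bundlinge} supplying tightness, then take the maximum over agents. The paper phrases this per agent---agent~$i$ does both hops using her own first-hop values, so all of her proposals lie on the single line through $(1,u_i)$ and $(p_{j_i},v_{ij_i})$ in $(p,v_i)$-space, which is a supporting line for the item set and hence gives lower bounds everywhere---and the global maximum is taken once at the end.

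Where your write-up deviates is in taking the global maximum already after Round~1 and feeding $p^{(1)}_{k'}$ into Round~2, and here there is a gap. In Round~2 you iterate over all pairs $(i,k')$ with $k'$ on the opposite value-side of $k$ for agent~$i$; but for an agent~$i$ whose optimum bundles do \emph{not} contain $k$, the ``remaining item'' condition places no restriction on which side of $u_i$ the item $k$ lies, and if $k$ is on the opposite side from $j_i$ then the pivot $k'$ is on the \emph{same} side as $j_i$. Agent~$i$ therefore made no Round-1 proposal for $k'$, so nothing forces $p^{(1)}_{k'}\ge\hat p^{\,i}_{k'}$, and plugging an underestimate of $p_{k'}$ into the bundle equation for $(k,k')$ can push the inferred price for $k$ strictly above the true $p_k$, so the Round-2 maximum can overshoot. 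Your qualifier ``whose price was fixed in Round~1'' does not resolve this: read as ``equals the true price'' it is not algorithmically checkable, and read as ``some agent proposed for $k'$ in Round~1'' the overshoot stands. The fix is exactly what the paper does: keep both hops per agent (equivalently, in Round~2 restrict to the single pivot $k'=k_i$, the partner of $j_i$, whose Round-1 price you have already argued is exact), and take the global maximum only at the end.
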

\begin{proof}
The key for finding the prices is the observation that if for a bundle $b=(j,k)$, we have $\alpha_b$ and $\Price_j$, then we can find $\Price_k$. This fact, combined with lemma \ref{Lem:Bundlinge}, imply that if for each agent $i$, we know the price of one item in one of her optimum bundles, then we can find the price of all the items in her optimum bundles. The only problem is that we do not know which one of her candidate bundles is also one of her optimum bundles.

Assume that for each agent $i$, we know a good $g_i \in [m]$ is in one of her optimum bundles and we have a variable for $g_i$'s price. To find a formula for price of other items with the variables we have defined so far, consider the following game. Assume that $i$ is playing a game, in which she wants to maximize the number of her optimum bundles. She can participate in this game by proposing a price for all the items in her candidate bundles, knowing $\Price_{g_i}$ and her candidate bundles. W.l.o.g assume that $\Price_{g_i}<0$. For each of her candidate bundles $b=(g_i,j)$, $i$ first proposes price $\Price_j=\frac{1-\alpha_b{g_i}}{1-\alpha_b}$ for item $j$, then for each of her candidate bundles $b'=(k,j)$, she proposes price $\Price_k=\frac{1-(1-\alpha_{b'})\Price_j}{\alpha_{b'}}$ for item $k$. Finally she proposes price $1$ for all the items which form a bundle alone. By doing this, if any of her proposed prices is chosen, that item will be in her optimum bundles. Now, for each item $j$, we, the game coordinator, choose the maximum price for $j$ among all the prices which were proposed by the agents for $j$ and set that to be the price of item $j$.

Note that in equilibrium, we have to choose the maximum proposed price, since if we choose less than that, the agents with higher proposals will have candidate bundles whose price is less than 1, this is a contradiction with lemma \ref{Lem:CanBunPrice}.
\end{proof}
%\textbf{Proof Sketch:}
%The key for finding the prices is the observation that if for a bundle $b=(j,k)$, we have $\alpha_b$ and $\Price_j$, then we can find $\Price_k$. This fact, combined with lemma \ref{Lem:Bundlinge}, implies that if for each agent $i$, we know the price of one item in one of her optimum bundles, then we can find the price of all the items in her optimum bundles. The only problem is that we do not know which one of her candidate bundles is also one of her optimum bundles. The proof uses the candidate bundles of all the agents in order to find a set candidate prices for each item. We show that the price of each item has to be the maximum among the candidate prices. \pjedit{Note that for the candidate bundles with a single item, the agent's candidate price is always 1.} \hfill $\square$

In order to use the above lemma, we should be able to do two things: (i) For each agent $i$, \etedit{select} an item $j$ that is in \etedit{one of} $i$'s optimum bundle \etedit{with two items} \etedit{(if such an item exists)}, and \etedit{set} its price $\Price_j$. (ii) For each item $j$, find the maximum proposed price among all the proposed prices for that item\etedit{, with proposed price of 1 of items in single item candidate bundles for any agent (including agents with no special item)}. We can do the first task by checking all possible assignments of the special items with defining $O(m^n)$ separate equilibrium structures for each possible selection of one special item assigned to each agent, and checking them separately. Since the number of agents is constant, the number of different equilibrium structures is polynomial. For each case, we use $O(n)$ variables, one for agent utility, and one for the price of \etedit{the proposed special items for agents}. To define prices of other items, we use candidate bundles for each agent to define candidate prices, add polynomial comparing the expressions for candidate prices. The actual price of the item is the highest of all prices as shown in the proof of Lemma \ref{Lem:Pricefromfew}, which is now set uniquely in each cell. Note that if an agent $i$ is proposing a price for an item $j$ which is higher than the proposed price of another agent $i'$ and $j$ is the special item of $i'$, then this cell cannot contain equilibria.

% We can also do the second task by defining a variable for $\Price_{g_i}$ for all $i\in [n]$ and defining a polynomial for each pair of the proposed prices and check its sign, i.e. sorting the proposed prices. Note that we are only defining $n$ variables. We should only consider the cases in which if an agent is proposing a price for an item that we assumed is in another agent's optimum bundles, the proposed price should be higher than the initial price. At the end, if agent $i$'s proposal for item $j$ is the highest proposal, $j$ is in $i$'s optimum bundles. We summarize all the arguments in this section in the following lemma.

\begin{lemma}\label{Lem:SecAgPrBun}
Consider the space of at most $2n$ variables including agents' utilities and price of one item in each agent's optimum bundles. Now we add $O(mn^2)$ polynomials: comparing utilities to item values, and comparing candidate item prices using candidate budges, as defined in Lemma \ref{Lem:Pricefromfew}, the sign of these polynomials gives us a formula for the price of each item as well as the set of optimum bundles of each agent.
\end{lemma}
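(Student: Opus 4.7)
The plan is to exhibit, within the at-most-$2n$ variable space of agent utilities $u_1,\dots,u_n$ and special-item prices $p_{g_1},\dots,p_{g_n}$, an $O(mn^2)$-family of polynomials whose sign pattern in each cell pins down every item price and every agent's set of optimum bundles.

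First, I would introduce the $nm$ linear polynomials $\{u_i - v_{ij}\}_{i\in[n], j\in[m]}$. Their signs in a cell tell us, for every agent $i$ and item $j$, whether $v_{ij}<u_i$, $v_{ij}=u_i$, or $v_{ij}>u_i$. This immediately determines the set of candidate bundles of each agent as defined in Section~\ref{Sec:AgentsPriceBundleChar}: single-item bundles for items with $v_{ij}=u_i$, and pair bundles $(j,k)$ whenever $v_{ij}<u_i<v_{ik}$. For each pair bundle, the mixing ratio $\alpha^i_{jk}=(u_i - v_{ik})/(v_{ij}-v_{ik})$ is a rational function of $u_i$ whose denominator has a known, fixed sign in the cell.

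Next, following the construction in the proof of Lemma~\ref{Lem:Pricefromfew}, I would derive a \emph{proposed price} $\hat p^{(i)}_j$ for every item $j$ reachable from $g_i$ through candidate bundles of agent $i$, plus the constant proposal $\hat p^{(i)}_j = 1$ for items forming a single-item candidate bundle of $i$ (this includes agents who are assigned no special item). By Lemma~\ref{Lem:Bundlinge}, every item in any optimum bundle of $i$ is reachable in at most two hops from $g_i$ in this sense, so each $\hat p^{(i)}_j$ is a rational function of $(u, p_g)$ of constant total degree; clearing the denominators (which all have determined sign from the first family) turns each proposal into a polynomial of constant degree.

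Finally, for each item $j$ I add one polynomial comparing every pair of proposals $\hat p^{(i)}_j - \hat p^{(i')}_j$ coming from agents $i, i'$ that both propose a price for $j$, together with polynomials $\hat p^{(i)}_j - 1$ needed so that single-item proposals can be compared against two-item ones. Since at most $n$ agents propose a price for any given item, this yields $O(n^2)$ comparisons per item and $O(mn^2)$ polynomials overall, each of constant degree. Within each cell these signs identify, for every $j$, the largest proposal; by Lemma~\ref{Lem:Pricefromfew} this maximum is the equilibrium price $p_j$. Once every $p_j$ is fixed, a candidate bundle of agent $i$ is labeled an optimum bundle iff its total price equals $1$, which is now a polynomial condition in $(u, p_g)$ whose sign is already determined by the cell. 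The main obstacle is the bookkeeping in the third step: verifying that after clearing denominators the comparison polynomials indeed have constant degree, and that the at-most-two-hop coverage implicit in Lemma~\ref{Lem:Pricefromfew} together with the exchange property of Lemma~\ref{Lem:Bundlinge} really does produce a proposal for every item in every optimum bundle, so that no equilibrium price is missed.
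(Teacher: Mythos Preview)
Your proposal is correct and follows essentially the same approach as the paper. The paper itself does not provide a separate proof of this lemma; it is stated as a summary of the preceding discussion in Section~\ref{Sec:AgentsPriceBundleChar}, and your three steps (the $nm$ utility--value comparisons to determine candidate bundles, the derivation of each agent's price proposals via the two-hop construction of Lemma~\ref{Lem:Pricefromfew}, and the $O(n^2)$ pairwise comparisons per item to select the maximum proposal) exactly mirror that discussion, with the bookkeeping made explicit.
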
 
\subsection{Characterizing the Equilibria}\label{Sec:AgentsEqChar}
In this section, we add a set of new variables and polynomials to the set of variables and polynomials defined in Section \ref{Sec:AgentsPriceBundleChar}, in order to determine whether each cell contains equilibria. The new variables will help us define assignments. We cannot directly define a variable representing the allocation of all goods/bundles to agents, since the number of these is not constant.

\etedit{The key} %Our novel
idea is to show that for every equilibrium pricing, there is a specific allocation of items to agents where the number of items which is being shared between multiple agents is very small, and the allocation has a special structure. This helps us to significantly reduce the number of variables needed to define allocations. Consider two agents $i<j$, and all items of price $p_k<1$ in sorted by price as shown by Figure \ref{Fig:EqChar2}. We will show that there is an equilibrium allocation with \etedit{only two of these items shared between $i$ and $j$, and} the structure indicated by the figure, and the analogous structure for items of price $p_k>1$.
%Furthermore, this allocation has a nice structure that when you sort the items by their prices and divide the items based on whether their prices are less than or greater than the shared items, each subset of these items does not get allocated to one of these agents (see Figure \ref{Fig:EqChar2}).
If we know the shared items, this structure helps us with finding the owner of the items which only have a single owner in the allocation, hence finding the allocation of each agent using allocation variables only for shared items. Before stating the properties of this allocation in Lemma \ref{Lem:ReBundling}, we have to define some notations.

\begin{figure}
\centerline{\includegraphics[scale=0.6]{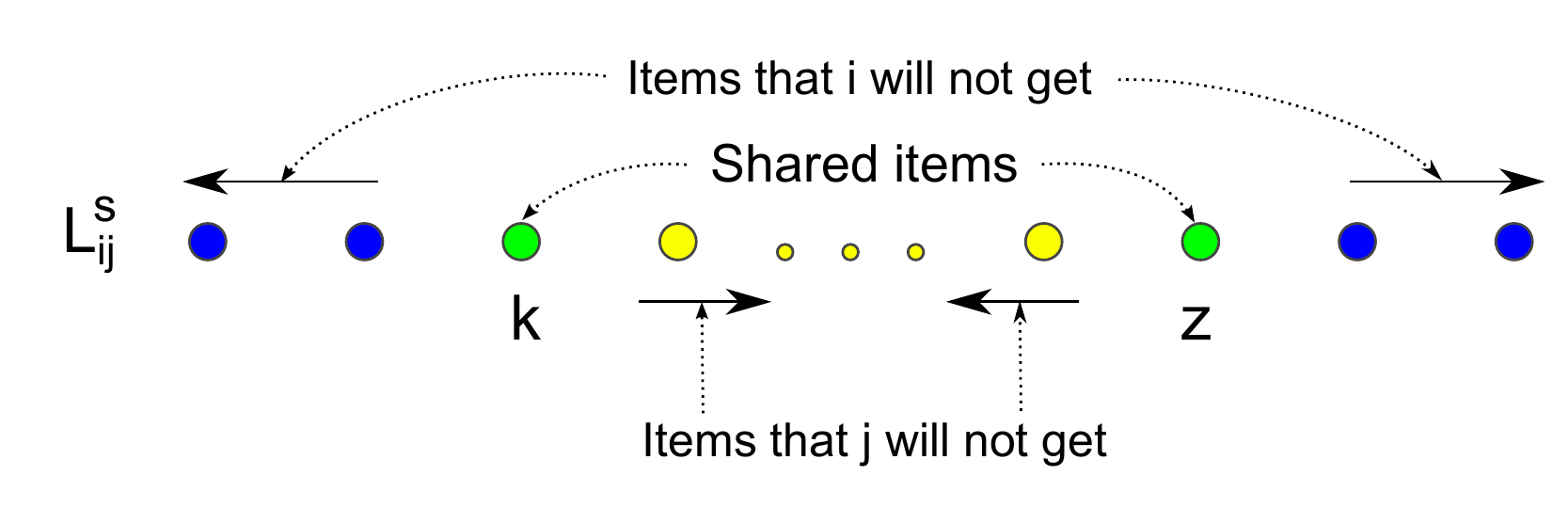}}
\caption{The nodes are the items in $S$ (one side of $G$) that are in a optimum bundle of both $i$ and $j$ ($i<j$), which are sorted by their prices. The figure shows the items that $i$ and $j$ share (green nodes), the items that of $i$ or $j$ will not get in the special allocation (blue and yellow nodes).}
\label{Fig:EqChar2}
\end{figure}

Consider a bipartite graph $G=(S,T,E)$ in which vertices in $S$ are the items with price less than $1$ and vertices in $T$ are the items with price more than $1$. For simplicity, we sort  items in each side of $G$ by their price and break ties with items' indexes. Figure \ref{Fig:EqChar2} is one side of this bipartite graph. For each bundle, we put an edge in the graph which connects the two items in the bundle.  For every pair of agents $i$ and $j$, let $B^S_{ij}=S\cap (V(B_i) \cap V(B_j))$ and $B^T_{ij}=T\cap (V(B_i) \cap V(B_j))$. Furthermore, let $H$ be the set of items whose price is 1 and let $B^H_i$ be the optimum bundles of agent $i$ which has exactly 1 item.
%First, we prove the following lemma that shows the existence of an allocation with useful properties. Intuitively speaking, in this allocation the number of shared items between each pair of agents in each side of $G$ is 2. Furthermore, assuming that the items are sorted by their price, these two shared items divide the items on this side of $G$ into 3 intervals. The agent with the higher index will not get any item that is in the second interval, and the agent with lower index will not get any item that is in the first or third interval interval. Figure \ref{Fig:EqChar2} shows this property.

For each pair of agents $i$ and $j$, let $L^S_{ij}$ and $L^T_{ij}$ be the list of items in $B^S_{ij}$ and $B^T_{ij}$ sorted by their price in increasing order (break ties with index of the items), respectively, the set of items of price above 1 (and below 1) that are part of optimum bundles for both $i$ and $j$. Figure \ref{Fig:EqChar1_Main} is showing a part of the above bipartite graph for a pair of agents.% restricted to the items shared by their optimum bundles.
%Let $L^S_{ij}(k)$ ($L^T_{ij}(k)$) be the index of $k$th item in $L^S_{ij}$ ($L^T_{ij}$) and for each item $z\in B^S_{ij}$ ($z\in B^T_{ij}$), let $l^S_{ij}(z)$ ($l^T_{ij}(z)$) be the position of $z$ in $L^S_{ij}$ ($L^T_{ij}$).

\begin{figure}
%\centerline{\includegraphics[height=3cm,width=10cm]{EqChar1}}
\centerline{\includegraphics[scale=0.6]{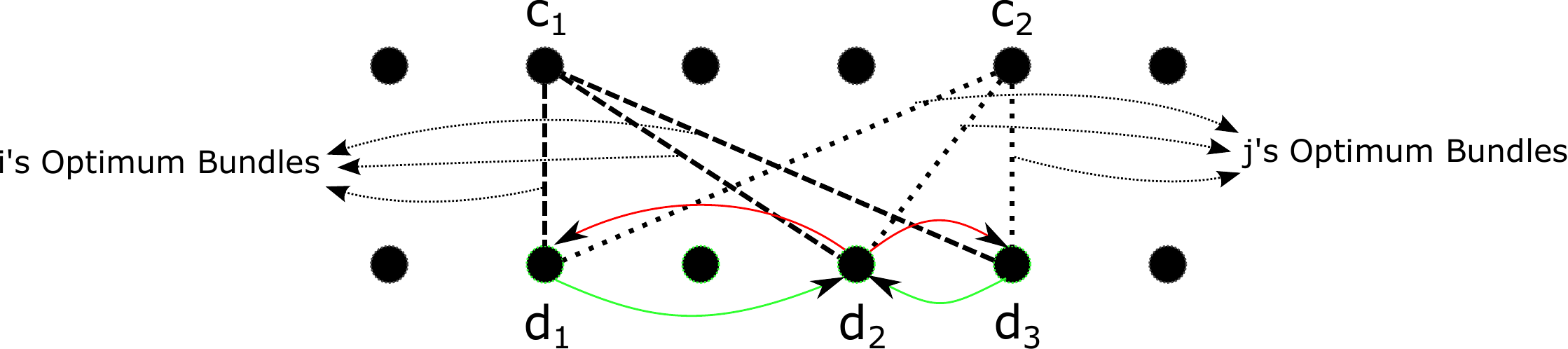}}
\caption{The arrows show the trades of items between agents $i$ and $j$ in the proof of Lemma \ref{Lem:ReBundling}.}
\label{Fig:EqChar1_Main}
\end{figure}

The following Lemma claims that there is an equilibrium  allocation of a special structure suggested by Figure \ref{Fig:EqChar2}: on each side of the bipartite graph the agents share at most two items, and the agent with lower index only gets items between the two shared items, while the agent with higher index only gets items outside this interval \etedit{as shown by Figure \ref{Fig:EqChar2}}.

The main idea of the proof is that \etedit{given} any equilibrium allocation, we can make each pair of agents that violate the properties trade their items, as illustrated by the Figure \ref{Fig:EqChar1_Main}. %, so that eventually we get the desired allocation.
Note that the running time of this trading process is not important, since we only use it to show that for any equilibrium pricing an allocation with the desired structure exists.

\begin{lemma}\label{Lem:ReBundling}
For every equilibrium pricing $\Prices$, there exists an equilibrium allocation $\Allocss$ of items to agents such that for every pair of agents $i$ and $j$ ($1\leq i<j\leq n$)
\begin{enumerate}[1.]
\setlength{\itemsep}{0pt}\setlength{\parsep}{0pt}\setlength{\parskip}{0pt}
\item There are at most $2$ items that $X$ is allocating to both $i$ and $j$ on each side of $G$. Furthermore, if $k$ and $z$ are two items in $S$ ($T$) such that $p_k\leq p_z$ and $i$ and $j$ are sharing $k$ and $z$, $i$ \etedit{only gets items between $k$ and $z$ in the order sorted by price, while}
    %does not get any of the items in $B^S_{ij}$ ($B^T_{ij}$) whose position in $L^S_{ij}$ ($L^T_{ij}$) is lower than $l^S_{ij}(k)$ ($l^T_{ij}(k)$) or higher than $l^S_{ij}(z)$ ($l^T_{ij}(z)$), and
    $j$ does not get any of the items from $B^S_{ij}$ ($B^T_{ij}$) whose whose position in $L^S_{ij}$ ($L^T_{ij}$) is between \etedit{$k$ and $z$ in the order}
    %$l^S_{ij}(k)$ ($l^T_{ij}(k)$) and $l^S_{ij}(z)$ ($l^T_{ij}(z)$)
    (see Figure \ref{Fig:EqChar2}).
\item There is at most one item $k$ in $B^H_{i} \cap B^H_{j}$ that \etedit{is shared by $i$ and $j$ in $\Allocss$, %is allocating to both $i$ and $j$. Furthermore,
    and} $i$ only get items from $B^H_i$ whose index is lower than $k$ and $j$ only gets items whose index is higher than $k$.
\end{enumerate}
\end{lemma}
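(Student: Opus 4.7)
The argument is by exchange. Starting from any equilibrium allocation, which by Lemma \ref{Lem:Bundling} and Corollary \ref{Lem:Bundlingd} can be viewed as an allocation of bundles drawn from $B_i$ for each agent $i$, I would iteratively perform trades between pairs of agents that preserve all equilibrium conditions (prices, budgets, matching constraint, optimality, and market clearing) and drive the allocation toward the structure required by Properties 1 and 2.

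Property 2 is the easier case. Items in $B^H_i \cap B^H_j$ are single-item bundles of price $1$ for both $i$ and $j$, so any amount can be swapped one-for-one between $i$ and $j$ with no effect on the matching, budget, or optimality conditions, and with no change to the allocation of any other agent. A single global sweep suffices: traverse items of $H$ in index order and, for each item $h$, assign as much of $h$ as possible to the lowest-indexed agent $i$ with $h\in B^H_i$ whose total $H$-side amount (the total mass of single-item bundles she already holds in the starting allocation) is not yet reached. For any pair $i<j$, the items of $B^H_i\cap B^H_j$ held by $i$ then have smaller indices than those held by $j$, with at most one shared boundary item.

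Property 1 is more delicate because $S$- and $T$-side items are coupled through bundles. The enabling tool is Lemma \ref{Lem:Bundlinge}, which lets each agent freely recombine $S$-items and $T$-items within her own optimum bundles. Consequently, if the allocation between $i$ and $j$ on side $S$ violates the desired structure---say agent $j$ holds some amount of an item $s^\ast$ whose position in $L^S_{ij}$ lies strictly between two items $s_1, s_2$ held by $i$---we perform the trade pictured in Figure \ref{Fig:EqChar1_Main}: move a small amount of $s^\ast$ from $j$ to $i$ and a compensating amount of $s_1$ (or $s_2$) from $i$ to $j$, while simultaneously exchanging $T$-partners, drawn from $B^T_{ij}$ or recombined via Lemma \ref{Lem:Bundlinge}, so that the total allocations and budgets of both $i$ and $j$ remain equal to $1$. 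By Corollary \ref{Lem:Bundlingd} the utility of each agent is unchanged. A symmetric argument handles side $T$, and the analogous move is used to eliminate violations where $i$ holds items in $B^S_{ij}$ outside $j$'s range.

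The main obstacle, and the step requiring the most care, is simultaneity: a trade between $i$ and $j$ alters their holdings and may disturb the structure previously achieved for pairs $(i,\ell)$ or $(j,\ell)$. I would address this through a global potential
\[
 \Phi(\Allocss) \;=\; \sum_{1\le i<j\le n} \bigl(\Phi^S_{ij}(\Allocss) + \Phi^T_{ij}(\Allocss)\bigr),
\]
where $\Phi^S_{ij}$ counts the items $s\in L^S_{ij}$ held by $j$ that have items held by $i$ both strictly below and strictly above them in $L^S_{ij}$, plus an analogous count of $i$-items that lie outside the extremes of $j$'s holdings. The task is then to show that each violation admits a trade of the type above that strictly decreases $\Phi$: the trade removes the offending inversion for $(i,j)$, and because the exchanged mass always stays within $B^S_{ij}\subseteq B^S_{i\ell}\cap B^S_{j\ell}$ and is moved to boundary positions of the current ranges for all affected pairs, no new inversion is created for any pair $(i,\ell)$ or $(j,\ell)$. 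Since $\Phi$ is a non-negative integer, the process terminates in finitely many steps, and at termination every pair $i<j$ satisfies Properties 1 and 2, giving an equilibrium allocation with the claimed structure.
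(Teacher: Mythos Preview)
Your overall strategy---start from an arbitrary equilibrium allocation and perform cost- and allocation-preserving swaps until the structure of Properties 1 and 2 is achieved---is correct and is exactly what the paper does. Your treatment of Property~2 via a single index-ordered sweep is fine (and arguably cleaner than the paper's one-line ``easier version'' remark).

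Two points deserve attention, however.

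\textbf{The swap for Property 1.} Your proposed trade is a two-item swap on side $S$ (give $s_1$, take $s^\ast$) compensated by exchanging $T$-partners. This forces you to find suitable shared $T$-items and to invoke Lemma~\ref{Lem:Bundlinge}, and it is not clear such $T$-items always exist. The paper's swap is simpler and avoids the $T$-side entirely: if $i$ holds $s_1,s_2$ and $j$ holds $s^\ast$ with $p_{s_1}\le p_{s^\ast}\le p_{s_2}$, choose $\beta$ with $\beta p_{s_1}+(1-\beta)p_{s_2}=p_{s^\ast}$ and transfer $\beta z$ of $s_1$ and $(1-\beta)z$ of $s_2$ from $i$ to $j$ while transferring $z$ of $s^\ast$ from $j$ to $i$. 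Both agents' total allocation and total spending are unchanged, and since $s_1,s^\ast,s_2\in B^S_{ij}$ lie in optimum bundles of both agents, utilities are unchanged as well. No $T$-side adjustment is needed.

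\textbf{The global termination argument.} You correctly flag the simultaneity concern, but your resolution has an error: you assert $B^S_{ij}\subseteq B^S_{i\ell}\cap B^S_{j\ell}$, whereas the containment actually goes the other way (membership in $B^S_{i\ell}\cap B^S_{j\ell}$ requires the item to be in optimum bundles of $i$, $j$, \emph{and} $\ell$). An item swapped between $i$ and $j$ can perfectly well lie in $B^S_{i\ell}$, and moving it may create a new inversion for the pair $(i,\ell)$, so your claim that $\Phi$ strictly decreases does not follow. The paper, for its part, gives only a per-pair potential (based on the positions of $i$'s extreme items and the count of $j$'s items between them) and shows each swap drops it by one, then simply asserts ``repeat until no violation exists'' without addressing interaction across pairs. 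To make either argument fully rigorous you would need a potential that genuinely controls all pairs at once.
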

\begin{proof}
Suppose that we have an allocation of bundles to agents $\mathbf y$ in an equilibrium. We want to reallocate these bundles so that it satisfies the conditions of the lemma.

Assume that there exist two agents $1\leq i<j \leq n$, such that there are tree bundles $a_1=(c_1,d_1)$, $a_2=(c_1,d_2)$ and $a_3=(c_1,d_3)$ such that $a_1,a_2,a_3 \in B_i$ and three bundles $b_1=(c_2,d_1)$, $b_2=(c_2,d_2)$ and $b_3=(c_2,d_3)$ such that $b_1,b_2,b_3 \in B_j$. Suppose that $p_{d_1}\leq p_{d_2} \leq p_{d_3}$ and $Y\mathbf y$ is allocating $y_{id_1},y_{id_3}>0$  of $a_1$ and $a_3$ to agent $i$, and $y_{jd_2}>0$ of $d_2$ to agent $j$. Furthermore, assume that $y_{ic_1},y_{jc_2}>0$.

Since  $\Price_{d_1}\leq \Price_{d_2} \leq \Price_{d_3}$, there is $0<\beta<1$ such that $\beta \Price_{d_1} + (1-\beta) \Price_{d_3}=\Price_{d_2}$. So if we remove $\beta z$ from $y_{id_1}$ and $(1-\beta) z$ from $y_{id_3}$ and add $z$ to $y_{id_2}$ then the total cost of agent $i$ will be the same. Similarly, if we add $\beta z$ to $y_{jd_1}$ and $(1-\beta) z$ to $y_{jd_3}$ and remove $z$ from $y_{jd_2}$, then the total cost of $j$ will not change. Furthermore, it is easy to see that doing this does not affect the matching constraints. Figure \ref{Fig:EqChar1} demonstrates this procedure.

\begin{figure}
\centerline{\includegraphics[scale=0.5]{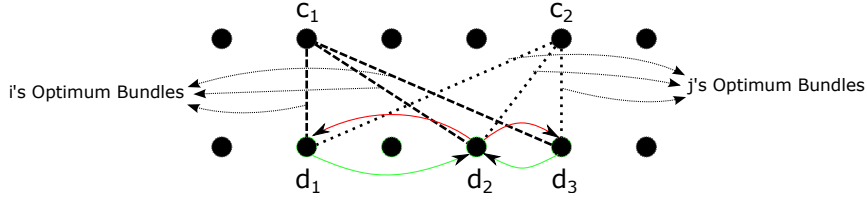}}
\caption{Red and green arrows shows the trades of items between agent $i$ and $j$ in proof of lemma \ref{Lem:ReBundling}.}
\label{Fig:EqChar1}
\end{figure}

Now, we have to show that doing this does not change the utility of $i$ and $j$. From the assumption that $Y\mathbf y$ is an equilibrium allocation, it follows that by doing this, the utility of $i$ and $j$ cannot increase. Assume that by doing this utility of agent $i$ decreases. We have
\begin{align*}
u_i&=\alpha_{a_2}v_{id_2} + (1-\alpha_{a_2})v_{ic_1}\\
&< \alpha_{a_2}(\beta v_{id_1}+ (1-\beta)v_{id_3}) + (1-\alpha_{a_2})v_{ic_1}\\
&= \beta \alpha_{a_2}v_{id_1} + (1-\beta)\alpha_{a_2}v_{id_3} + (1-\alpha_{a_2})v_{ic_1}
\end{align*}
similarly we have
$$\beta \alpha_{a_2}\Price_{d_1} + (1-\beta)\alpha_{a_2}p_{d_3} + (1-\alpha_{a_2})\Price_{c_1}=1$$
This means that if agent $i$ only use these three items with these ratios, he gets one unit of item, spends at most 1 unit of money and her utility will be more than $u_i$. This is a contradiction. The argument for agent $j$ is similar to the argument for agent $i$.

For each pair of agents $1\leq i<j \leq n$, we find the cheapest item $k$ and the most expensive item $z$ in $S$ that $i$ owns, if there is an item $r$ in between $k$ and $z$ in the ordered list of items in $S$ that $j$ owns, we switch the allocation we switch the ownership of these items until, one of the following cases happen

\begin{itemize}
\item $j$ runs out of item $r$.
\item $i$ runs out of item $k$.
\item $i$ runs out of item $z$
\end{itemize}

Let $\Phi$ be a potential function which is equal to $|S|$ minus the the position of highest positioned item that $i$ owns in the ordered list, plus the position of the lowest positioned item that $i$ owns in the ordered list, plus the number of items for which $j$ is a shareholder and are between the the two in the ordered list. If we repeat the above procedure, each time this potential function will decrease by 1. Since the potential function is always non-negative, we cannot continue the above procedure forever. This means that at after some iterations, we will reach an equilibrium allocation which satisfies the first condition for this pair of agents.

We repeat this procedure until such pair of agents and set of items with these properties do not exist. We also do the same to the allocation of items on the other side of $G$ ($T$) to agents. For the items in $B^H_{i} \cap B^H_{j}$, one can also transfer items between $i$ and $j$ to satisfy the second condition by finding two items that violate the condition and switch their ownership. The procedure defined for satisfying these conditions are an easier version of the previous procedure.
\end{proof}

From now on, we focus on finding and characterizing the specific equilibrium allocation which is guaranteed to exist by this lemma. By using Lemma \ref{Lem:ReBundling}, for every pair of agents $i,j\in[n]$ we can use 5 item indexes $f^S_{ij}$, $r^S_{ij}$, $f^T_{ij}$, $r^T_{ij}$, and $h_{ij}$ that tell us which items are shared by agent $i$ and $j$.
We can assume that we know what are these shared items by simply checking all the possible $O(m^{5n^2})$ of these combinations. %, since there are at most $O(m^{5n^2})$ of these combinations. 
At the start of the algorithm, we fix the 5 shared items for each pair of agents and an item in optimum bundle of each agent (which we discussed in Section \ref{Sec:AgentsPriceBundleChar}). We call this set of items associated with each agent and pairs of agents the \textit{structure} of the equilibrium. For each such a structure, we will aim to decide if there is an equilibrium with the given structure.

When considering equilibria of a given structure, we define variables for the allocation of the at most $5n^2$ shared items, but do not define variable for allocations of other items. Next we show that given the allocations of the shared items, we can (i) infer allocations of all other items using the structure of Lemma \ref{Lem:ReBundling}, and (ii) can also define polynomials whose sign will tell us if there is an equilibrium with the given structure and allocation of shared items.

Since all the items should be sold in the equilibrium, for each agent $i$, all the items that are in only in $i$'s optimum bundles, should get allocated to $i$. Second, we have defined a variable that indicates the share of each agent for the shared items. The only thing left is to consider items that are in the set of the optimum bundles of multiple agents, but are not shared by these agents.

Lemma \ref{Lem:ReBundling} helps us find allocation of this set of items. We start with the agent with the lowest index (agent 1) and one side of $G$, say $S$. The items that agent 1 gets, are the ones that satisfy all the constraints given to us by second part of the lemma. In order to check whether for an item $j$ all these constraints hold, for each agent $i$, we can look at the position of item $j$ in $L^S_{1i}$. If $j$'s position is between position of $f^S_{1i}$ and $r^S_{1i}$ in the list, for all the choices of $i>1$, then it satisfies all the constraints. By lemma \ref{Lem:ReBundling}, we know this is a necessary and sufficient condition for $j$ to get allocated to agent 1. We do this for $S$ and $T$ separately, and remove the items that get allocated to agent 1. Now, we repeat this procedure for agent 2, but only check the constraints for agents $i>2$, then remove the items that agent 2 gets. We repeat this for rest of agents based on their indexes. We can do the same procedure for items in $H$ to find which agent is getting what item. Now, we are ready to exactly specify what are the necessary and sufficient conditions for the prices in each cell to be the equilibrium prices. This process is summarized in the following lemma.

\begin{lemma}\label{Lem:AEqCond}
 Consider a structure of special items for agents, and shared items between agents (as defined after Lemma \ref{Lem:ReBundling}). Now consider a cell in the space of variables of agent utilities, prices of the special items, and allocation shares of the shared items, defined by the polynomials that help define prices of all items. The prices and allocation of this cell forms an equilibrium, if and only if the allocation defined above satisfies the following constraints
\begin{enumerate}[1.]
\setlength{\itemsep}{0pt}\setlength{\parsep}{0pt}\setlength{\parskip}{0pt}
\item All the items get fully allocated to agents.
\item Each agent gets exactly one unit of items.
\item For each agent $i$, the total cost of buying the items allocated to $i$ is exactly 1.
\item For each agent $i$, each of the items allocated to $i$ is in one of her optimum bundles.
%\item For each agent $i$, the allocation of items to $i$ can be rewritten as the allocation of one unit of $i$'s optimum bundles to $i$.
\end{enumerate}
\end{lemma}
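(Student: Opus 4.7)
The approach is to verify both directions of the equivalence, relying on the structural results in Sections \ref{Sec:AgentsPriceBundleChar} and \ref{Sec:AgentsEqChar}.

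\textbf{Forward direction.} Suppose $(\Prices, \Allocss)$ is an equilibrium for the cell. Condition 2 is the matching constraint built into the equilibrium definition. Condition 1 is automatic from condition 2: since $\sum_j \ItemCount_j = \NumAgents$ equals the total demand $\sum_i 1 = \NumAgents$ and $\sum_i \Alloc_{ij} \le \ItemCount_j$ by capacity, equality must hold for every $j$, so every item is fully allocated. Condition 3 is the budget-tightness property of equilibria noted in Section \ref{sec:prelim}. Condition 4 is exactly Corollary \ref{Lem:Bundlingd}.

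\textbf{Backward direction.} Assume conditions 1--4 hold. Feasibility (matching and capacity constraints) follows from conditions 1 and 2. It remains to verify that each agent's allocation is utility-maximizing at prices $\Prices$. I would apply LP duality to agent $i$'s optimization: the allocation $\Allocs_i$ is optimal with value $u_i$ iff there exists $\lambda_i \ge 0$ such that $v_{ij} \le u_i + \lambda_i(p_j - 1)$ for every item $j$, with equality on the support of $\Allocs_i$. The price construction of Lemma \ref{Lem:Pricefromfew}, propagated via the biclique structure of Lemma \ref{Lem:Bundlinge}, makes the prices of items in $i$'s optimum bundles satisfy the linear relation $p_j = 1 + (v_{ij} - u_i)/\lambda_i$ for some $\lambda_i > 0$ determined by the special item price. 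Combined with conditions 2, 3, and 4 this yields equality on the support and $\sum_j v_{ij} \Alloc_{ij} = u_i$.

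\textbf{Main obstacle.} The key remaining step is to verify the dual inequality for items \emph{outside} the support. For any candidate bundle $(j, k)$ of $i$ with $v_{ij} > u_i > v_{ik}$ and ratio $\alpha_{jk}$ satisfying $\alpha_{jk} v_{ij} + (1-\alpha_{jk}) v_{ik} = u_i$, the two dual inequalities at $j$ and $k$ combine, after weighting by $\alpha_{jk}$ and $1-\alpha_{jk}$ and algebraic cancellation, to the single condition $\alpha_{jk} p_j + (1-\alpha_{jk}) p_k \ge 1$, i.e., the candidate bundle costs at least 1. This is exactly Lemma \ref{Lem:CanBunPrice}, and it is enforced by the construction of Lemma \ref{Lem:Pricefromfew}: the coordinatewise maximum over all agents' proposed prices ensures that any candidate bundle's cost is at least the value suggested by its own agent's proposal, which is 1. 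A symmetric argument handles single-item candidate bundles (items with $v_{ij} = u_i$, forced to have $p_j \ge 1$). Thus a valid $\lambda_i$ exists for each agent, certifying optimality, and the equilibrium conditions are confirmed.
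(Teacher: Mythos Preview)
Your forward direction matches the paper's. For the backward direction the paper takes a much shorter, more combinatorial route: from conditions 2 and 3 (total amount $=1$, total cost $=1$) it simply reruns the decomposition in the proof of Lemma~\ref{Lem:Bundling}(c) to write each agent's allocation as a convex combination of price-$1$ bundles; condition~4 then says these are all \emph{optimum} bundles, and Corollary~\ref{Lem:Bundlingd} immediately gives optimality. Your LP-duality approach is a genuinely different and more explicit route: you exhibit a dual certificate $\lambda_i$ and verify the per-item inequalities $v_{ij}\le u_i+\lambda_i(p_j-1)$ directly. This buys you a self-contained argument that does not lean on the bundling decomposition, and it makes transparent \emph{why} the maximum-over-proposals price rule in Lemma~\ref{Lem:Pricefromfew} is exactly what is needed: it puts every agent's proposal on the line $p=1+(v-u_i)/\lambda_i$, and taking coordinatewise maxima only pushes prices up, preserving dual feasibility.

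One point in your write-up needs sharpening. In the ``main obstacle'' paragraph you say the two dual inequalities at $j$ and $k$ ``combine \dots\ to the single condition $\alpha_{jk}p_j+(1-\alpha_{jk})p_k\ge 1$.'' As written this is the wrong direction: averaging two inequalities yields a weaker one, and you want the converse. What actually makes your argument work is that for each non-support item $j$ (say $v_{ij}>u_i$) you pair it with a \emph{support} item $k$ on the other side, where you already have the dual \emph{equality} $p_k=1+(v_{ik}-u_i)/\lambda_i$; then the candidate-bundle inequality $\alpha p_k+(1-\alpha)p_j\ge 1$ together with that equality algebraically forces $p_j\ge 1+(v_{ij}-u_i)/\lambda_i$. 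Equivalently, one can state the clean equivalence: a $\lambda_i>0$ satisfying all dual constraints exists iff every candidate bundle of $i$ has price at least~$1$ (the upper bounds from low-value items and the lower bounds from high-value items on $\lambda_i$ are pairwise compatible exactly when the corresponding bundle price is $\ge 1$, and for interval constraints pairwise compatibility gives joint compatibility). Either phrasing fixes the gap; the underlying idea is correct.
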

\begin{proof}
$\Rightarrow$ The first, second and third condition follow from the market clearing conditions. The fourth condition directly follows from part c of lemma \ref{Lem:Bundling}.

$\Leftarrow$ From the first and second condition, we know that the pricing and allocation are market clearing. From the second, third and fourth condition, and the argument in proof of part c of lemma \ref{Lem:Bundling} we know that the allocation of items to each agent can be rewritten as her optimum bundles to her. So, from the definition of optimum bundles we know that the allocation of items to agents is optimal. Therefore, the prices and allocation are in equilibrium state.
\end{proof}

The final thing we need to do is to define a set of polynomials for checking the above conditions. \etedit{The first conditions holds for an item $j$, if summing up the share of each agent for that item, the sum is equal to $C_j$.}
%For the first condition, for each item $j$, we can sum up the share of each agent for that item. The first condition holds if and only if this sum is equal to $C_j$.
This can be handled by adding one polynomial for each item $j$.

\etedit{The second condition holds for an agent $i$, if when
%For checking the second condition, for each agent $i$, 
we sum all the items (including the proportion of the shared items) that $i$ gets, this sum  is exactly 1. } So we can also check this condition by adding a polynomial for each agent. Note that we can do this since we explained how to find out what is the allocation of items to agents for this cell.

\etedit{The third condition holds by Lemma  \ref{Lem:AEqCond} if multiplying the share of each agent for an item by its price and summing over all items, we get 1 for all the agents.  We} can do this by defining a polynomial for each agent and checking its sign. The fourth condition is guaranteed to hold \etedit{by definition}. %from the way we find the allocation of items to each agent. 
The following theorem summarizes how we equilibria are cells of the constraints discussed throughout this section.

\begin{theorem}\label{Thm:FixedAgentsEqChar}
Consider \etedit{an equilibrium} structure, %of special items for agents,
and the space of the $O(n)$ variables for agents' utilities and price of one special item in each agent's optimum bundle, the $O(n^2)$ variables for allocation of shared items between the agent. Divide this space into cells by the signs of the polynomials defined in the previous section, along with the $O(m^2+n)$ polynomials defined just above, for checking the existence of the special equilibrium allocation. The sign of these polynomials fully determines either that \etedit{the} vectors in this cell can be extended to form equilibria.% of the matching market.%, or no vectors in the cell can be extended.
\end{theorem}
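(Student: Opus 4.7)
The plan is to assemble the preceding lemmas into the claimed cell-decomposition statement for a fixed structure. By \cite{HZ79} a matching market admits some equilibrium $(\Prices,\Allocss)$, and by Lemma \ref{Lem:Bundling} together with Lemma \ref{Lem:ReBundling} one may always choose this equilibrium so that it exhibits the special form assumed by a structure: one special two-item bundle item per agent plus at most five shared items per pair. Hence at least one structure is compatible with an equilibrium, and the theorem is really a statement about the slice of the decomposition lying over a given structure.

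Fix such a structure and consider the $O(n^2)$-dimensional space of utility, special-price, and shared-allocation variables. I would first argue that within every cell these variables determine full price and allocation vectors uniquely. For prices this is essentially Lemma \ref{Lem:SecAgPrBun}: inside a cell the signs of $v_{ij}-u_i$ pin down every candidate bundle and its mixing ratio $\alpha^i_{jk}$, and the max-comparison polynomials used in the proof of Lemma \ref{Lem:Pricefromfew} fix which agent's proposal wins for each item, so every $p_j$ becomes an explicit rational expression in the variables. For allocations I would add comparison polynomials $p_j-p_k$ so that the sorted lists $L^S_{ij}$, $L^T_{ij}$ and the identities of the cheapest/most-expensive shared items are constant on a cell; the deterministic recipe following Lemma \ref{Lem:ReBundling} then assigns every non-shared item unambiguously to a single agent, while the shared-item shares are already explicit variables.

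The second step is to translate the four equilibrium conditions of Lemma \ref{Lem:AEqCond} into polynomial sign tests. Condition (1) yields an equation $\sum_i x_{ij}=C_j$ per item, condition (2) yields $\sum_j x_{ij}=1$ per agent, and condition (3) yields the budget equation $\sum_j p_j x_{ij}=1$ per agent after clearing the rational expressions for $p_j$ of denominators; together these are $O(m+n)$ polynomials, safely within the $O(m^2+n)$ budget (the extra slack covers comparison polynomials used for sorting and max-selection). Condition (4) is automatic in our construction, because the procedure only ever hands an item to agent $i$ when that item lies in one of $i$'s candidate bundles, which by Corollary \ref{Lem:Bundlingd} coincide with her optimum bundles inside a cell. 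Since each such polynomial has constant sign on a cell, either every parameter vector in the cell simultaneously satisfies all four conditions (and hence, by Lemma \ref{Lem:AEqCond}, extends to an equilibrium) or none does, which is the claim.

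The main obstacle I anticipate is bookkeeping rather than a conceptual step: one must verify that the rational formulas for $p_j$ produced by the recursive proposal procedure in the proof of Lemma \ref{Lem:Pricefromfew} clear to polynomials of constant degree in the $O(n^2)$ variables, and that the total count of comparison polynomials required to freeze both the max-selection and the sort orders of $L^S_{ij}, L^T_{ij}$ remains $O(m^2)$. Once these degree and count bounds are confirmed, Theorem \ref{THM:CellDecomp} yields a polynomial number of nonempty cells to enumerate per structure, and combining with the $O(m^{5n^2+n})$ enumeration of structures gives the algorithmic conclusion advertised at the start of Section \ref{Sec:FixedAgents}.
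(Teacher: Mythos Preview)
Your proposal is essentially the paper's own approach: Theorem~\ref{Thm:FixedAgentsEqChar} is stated as a summary of the preceding discussion (the paper gives no separate proof), and you correctly assemble Lemmas~\ref{Lem:SecAgPrBun}, \ref{Lem:ReBundling}, and \ref{Lem:AEqCond} in the same order and with the same polynomial budgets.

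One small slip to fix: you justify condition~(4) by saying candidate bundles ``by Corollary~\ref{Lem:Bundlingd} coincide with her optimum bundles inside a cell,'' but that corollary says no such thing---candidate bundles are in general a strict superset of optimum bundles (Lemma~\ref{Lem:CanBunPrice} only guarantees price $\ge 1$, not $=1$). The correct reason condition~(4) holds by construction is that Lemma~\ref{Lem:SecAgPrBun} already pins down the \emph{optimum} bundles from the signs (via the price-proposal comparison polynomials), and the allocation procedure following Lemma~\ref{Lem:ReBundling} hands an item to agent $i$ only when it lies in one of her optimum bundles, not merely a candidate bundle. With that correction your argument matches the paper.
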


Using this structure Theorem allows us to prove Theorem \ref{THM:FixedAgents}.

\textbf{Proof of Theorem \ref{THM:FixedAgents}. } We use theorem \ref{THM:CellDecomp} as the base of our algorithm. \etedit{We start by
%We have two brute force steps at the beginning of the algorithm
fixing the structure of the equilibrium,  selecting a single item from the optimum bundles of each agent, and selecting fixing 5  items shared} for every pair of agents. %which are the only items that both of the agents are getting in the equilibrium.
We check all the possible combinations, at most  $O(m^{n+5n^2})$ options, which is polynomial in $m$ \etedit{for fixed $n$.} %by the assumption that the number of agents $n$ is constant.

For a given equilibrium structure, we use the $O(n)$ variables, the agent utilities, and $O(mn)$ polynomials defined in section \ref{Sec:AgentsPriceBundleChar} to find a set of candidate bundles for each agent. Then we use an additional $O(n)$ variables, the prices of the special items for each agent, and $O(mn^2)$ polynomials in order to find a formula for the prices and the set of optimum bundles of each agent in each cell. Finally, we use the last set of $O(n^2)$ variables, the assignments of shared items, and $O(m^2+n)$ additional polynomials, in order to check whether the set of prices in the feasible cell are equilibrium prices with the given structure. %It is easy to see that
The degree of the defined polynomials is polynomial. We check all the non-empty cells of the resulting system, taking time polynomial on $m$ for any fixed $n$ by Theorem \ref{THM:CellDecomp}. Since the equilibrium exists, it should be in one of the non-empty cells.

Finally, if the prices of the cell are equilibrium prices, we take any vector from that cell, and extend it to get an equilibrium pricing and allocation. After we have the equilibrium prices $\Prices$,  we can also find each agent's allocation by finding a solution of the following set of inequalities.

\begin{align*}
&\sum_{b \in B_i}\Alloc_{ib}=1 &\forall i\in[n]\\
&\sum_{i}\big(\sum_{b:b=(j,k)}\alpha_b\Alloc_{ib} + \sum_{b:b=(k,j)}(1-\alpha_b)\Alloc_{ib} + \sum_{b:b=(j)} x_{ib} \big) =C_j &\forall j\in[m]\\
&\Alloc_{ib} \geq 0 &\forall i\in[n], b\in B\\
\end{align*}

In which, $\Alloc_{bi}$ is the amount of bundle $b$ used by agent $i$. Note that since we know this bundling is associated with an equilibrium, the feasible region of the above inequalities is not empty. Finally, the allocation of each agent $i$ for item $j$ in this equilibrium is
$$\Alloc_{ij}=\sum_{b:b=(j,k)}\alpha_b\Alloc_{ib} + \sum_{b:b=(k,j)}(1-\alpha_b)\Alloc_{ib} + \sum_{b:b=(j)} \Alloc_{ib}$$

%\subsection{Characterizing the Candidate Bundles with Polynomials}\label{Sec:AgentsCanBunChar}

% Acknowledgments
%\begin{acks}
%\end{acks}

% Bibliography
\bibliographystyle{alpha}
\bibliography{BibFile}

% Appendix
\appendix
\section*{APPENDIX}
\setcounter{section}{0}
\section{Fixed Number of Goods}\label{Sec:FixedGoods}
In this section, we give a polynomial time algorithm which finds the exact value of equilibrium prices and allocations when the number of goods is constant $m$ and the number of agents is an arbitrary number $n\in \mathbb{N}$. In order to find an equilibrium, our algorithm uses a cell decomposition technique which uses the algorithm proposed in \cite{CellDecomp:98}.

The goal of this section is to prove the following theorem by providing a polynomial time algorithm.

\begin{theorem}\label{THM:GoodsMain}
Finding an equilibrium of a market with fixed number of goods in which agents have additive values, one unit of money and matching constraints can be done in polynomial time.
\end{theorem}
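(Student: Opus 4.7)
The plan is to apply the cell decomposition of Theorem~\ref{THM:CellDecomp} directly on the $m$-dimensional space of prices $\Prices=(p_1,\ldots,p_m)$. Since $m$ is constant, we can afford $\poly(n)$ polynomials and still enumerate every cell in polynomial time.

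I would first introduce the polynomials $p_j-1$ for every $j\in[m]$, so that within each cell their signs fix the partition $S=\{j:p_j<1\}$, $H=\{j:p_j=1\}$, $T=\{j:p_j>1\}$. By Lemma~\ref{Lem:Bundling}, the candidate bundles in any equilibrium at these prices are then the pairs $(j,k)$ with $j\in S$, $k\in T$, together with the singletons $(j)$ for $j\in H$. For a pair bundle $b=(j,k)$, the unique mix with unit price $1$ is $\alpha_b=(p_k-1)/(p_k-p_j)$, so the unit value for agent $i$ is the rational function $v_{ib}=((p_k-1)v_{ij}+(1-p_j)v_{ik})/(p_k-p_j)$.

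To identify each agent's optimum bundles I would add, for every agent $i$ and every ordered pair of candidate bundles $(b,b')$, a polynomial obtained by cross-multiplying the comparison $v_{ib}\ge v_{ib'}$ by the denominators $(p_k-p_j)(p_{k'}-p_{j'})$, whose sign is already fixed inside each cell by the polynomials $p_j-1$. This adds $O(nm^4)$ constant-degree polynomials. Inside each resulting cell the sign pattern pins down the set $B_i$ of optimum bundles for every agent $i$.

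For every cell I would then test whether the prices support an equilibrium by solving a linear feasibility problem in variables $y_{ib}\ge 0$, the amount of bundle $b\in B_i$ allocated to agent $i$, with $\sum_{b\in B_i}y_{ib}=1$ for each agent $i$ and $\sum_{i}\sum_{b\ni j}\beta_{b,j}\, y_{ib}=C_j$ for each item $j$, where $\beta_{b,j}$ is the fractional share of item $j$ inside bundle $b$. By Corollary~\ref{Lem:Bundlingd} and market clearing these conditions are necessary and sufficient for the prices to form an equilibrium, and by~\cite{HZ79} at least one cell must admit a feasible LP. Plugging into Theorem~\ref{THM:CellDecomp} gives $O((nm^4)^{m+1})$ cells, polynomial in $n$ for fixed $m$, with only polynomial work per cell. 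The main subtlety I expect lies in careful bookkeeping of signs when cross-multiplying the bundle-value comparisons, and in folding in the boundary cases where some $p_j=1$ forces singleton bundles into $B_i$.
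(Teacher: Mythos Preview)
Your decomposition of the price space and the identification of each agent's optimum bundle set $B_i$ inside a cell are fine and match the paper. The gap is in the step where you ``solve a linear feasibility problem'' per cell. The coefficients $\beta_{b,j}$ in that system are the mixing ratios $\alpha_b=(p_k-1)/(p_k-p_j)$ (or $1-\alpha_b$), and these are \emph{not} constant on a cell: your polynomials only fix signs, while $\Prices$ still ranges over an open region. So what you wrote is a parametric system whose feasibility genuinely varies with $\Prices$ inside the cell. Concretely, with two agents, two items, $C_1=C_2=1$, and a cell where both agents' unique optimum bundle is $(1,2)$, market clearing forces $2\alpha_b=C_1$, i.e.\ $p_1+p_2=2$, a measure-zero slice of the cell; sampling any other point and solving an LP there gives infeasibility and misses the equilibrium.

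The paper closes exactly this gap by enlarging the variable space of the cell decomposition rather than deferring to an LP. It introduces one extra variable $x_b$ per bundle (total usage of $b$), so the decomposition is over $(\Prices,\mathbf{x})\in\Reals^{O(m^2)}$, still constant dimension. The item-clearing identities then become genuine polynomial constraints in the joint variables, and the existence of a per-agent allocation of these bundle totals is encoded by Hall-type inequalities $\sum_{b\in T}x_b\ge |A(T)|$ over all $2^{O(m^2)}$ subsets $T$ of bundles (Lemmas~\ref{Lem:GEChar} and~\ref{Lem:GEEquiv}). The existential quantifier over allocations is thereby absorbed into the cell decomposition, so each resulting cell either entirely witnesses an equilibrium or entirely does not; only after a witnessing cell is found is a concrete $\Prices$ sampled and the per-agent allocation recovered by an LP with now-fixed coefficients.
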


\subsection{Characterizing the Bundles with Polynomials}\label{Sec:GoodsBundleChar}
In this section, we define a set of variables, which represent the prices, and polynomials so that the sign of polynomials in each cell determines the bundling and the optimum bundles of each agent for the set of prices in that cell.

First, for each item $j$ we define a variable $\Price_j$ to represent its price. We define $m^2$ polynomials $\Price_j - \Price_k$ for all $j,k\in[m]$, $\Price_j$ for all $j\in[m]$ and $\Price_j-1$ for all $j\in [m]$ that give us the order of items' prices, check whether price is non-negative and whether $\Price_j$ is less than, greater than or equal to $1$, respectively. Recall lemma \ref{Lem:Bundling}. Instead of finding equilibrium allocation of items to agents, we focus on finding the allocation of bundles to agents.

By lemma \ref{Lem:Bundling} we know that each bundle should have price $1$. It means that if items $j$ and $k$ are in a bundle $b$, then $\Price_b=\alpha_b \Price_j + (1-\alpha_b) \Price_k =1$. We can rewrite this equation to get $\alpha_b$ from $p_j$ and $\Price_k$, i.e. $\alpha_b=\frac{1-\Price_k}{\Price_j-\Price_k}$. So for every $j,k\in[m]$, we define $q_{jk}=\frac{1-\Price_k}{\Price_j-\Price_k}$, and when $0 \leq q_{jk} \leq 1$, it means $j$ and $k$ form a valid bundle. Note that we don't put $j$ and $k$ in a bundle unless $\Price_j \neq \Price_k$. Since in each cell we know the order of prices, we can check whether $\Price_j= \Price_k$, $\Price_j>p_k$ or $\Price_j<\Price_k$, then for the last two cases, check whether $0 \leq  1-\Price_k \leq \Price_j-\Price_k$ and $0 \geq  1-\Price_k \geq \Price_j-\Price_k$ respectively. The only thing left now is to check the bundles with only one item. Since an item forms a bundle if and only if its price is exactly 1, we can do this by checking sign of $\Price_j-1$ for all the items. It is easy to see that we can find whether each of these conditions hold by defining $O(m^2)$ polynomials and checking their sign in each feasible cell.

Now, each output cell of the cell decomposition algorithm gives us the valid bundling associated with the signs of the polynomials in that cell. We also wish to find the optimum bundles for each agent. Note that for a valid bundle $b=(j,k)$, $q_{jk}=\alpha_b$. By corollary \ref{Lem:Bundlingd}, we know that value of each agent should be maximum for all of her optimum bundles. So given that for each bundle $b$, we can write $\alpha_b$ in terms of the prices, we can also write the value $v_{ib}$ of each agent $i$ for each bundle $b$. So to find out which of the valid bundles is optimum, we can check the sign of $v_{ib}-v_{ib'}$ for all $i\in[n]$ and $b,b' \in B$, and order the value of agents for bundles. This can be done by defining $O(nm^4)$ number of polynomials. By adding these polynomials, each cell also tells us what is the order of each agent over all the bundles, hence we can find the optimal bundles of each agent.

We can sum up the above arguments in the following lemma.
\begin{lemma}\label{Lem:SecGBun}
For each cell, defined by the above $O(nm^4)$ polynomials and $O(m)$ variables, we can use the signs of the defined polynomials in order to find the bundles and the optimum bundles of each agent for the set of prices in that cell.
\end{lemma}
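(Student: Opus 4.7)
The plan is to verify that the three groups of polynomials introduced above are collectively sufficient to (i) identify the set of valid bundles from Lemma~\ref{Lem:Bundling} and (ii) identify each agent's optimum bundles from Corollary~\ref{Lem:Bundlingd}, using only the signs of these polynomials on any fixed cell.

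First I would argue that within a single cell, the signs of the $O(m^2)$ polynomials of the form $p_j$, $p_j - p_k$, and $p_j - 1$ give us a complete qualitative picture of the price vector: for every $j$ we know whether $p_j$ is positive, zero, less than $1$, equal to $1$, or greater than $1$, and for every pair $(j,k)$ we know the relative order of $p_j$ and $p_k$. In particular, we can split $[m]$ into three groups $S,H,T$ corresponding to $p_j < 1$, $p_j = 1$, and $p_j > 1$. The singleton bundles are then exactly the items in $H$, and the two-item candidates for bundles are exactly the pairs $(j,k)$ with $j \in S$ and $k \in T$, as required by Lemma~\ref{Lem:Bundling}.

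Next I would handle the two-item bundles by examining $q_{jk} = (1-p_k)/(p_j-p_k)$. The condition $0 \le q_{jk} \le 1$ is equivalent, once we know the sign of $p_j - p_k$, to a pair of linear sign conditions on $1-p_k$ and $1-p_k-(p_j-p_k) = -(p_j-1)$, both of which are already encoded by the polynomials $p_j - 1$, $p_k - 1$, and $p_j - p_k$ that we decomposed on. Hence in each cell we know exactly which pairs $(j,k)$ form a valid bundle $b$, and we know a closed-form rational expression for the mix $\alpha_b = q_{jk}$ in terms of the price variables.

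Finally, to extract optimum bundles I would substitute these $\alpha_b$ expressions into $v_{ib} = \alpha_b v_{ij} + (1-\alpha_b)v_{ik}$ for each agent $i$ and each valid two-item bundle $b=(j,k)$, and set $v_{ib} = v_{ij}$ for each singleton bundle $b=(j) \in H$. Comparing $v_{ib}$ and $v_{ib'}$ for any pair of bundles reduces, after clearing the (known-sign) denominators $p_j-p_k$, to the sign of a fixed polynomial in the price variables. Since there are $n$ agents and $O(m^2)$ possible bundles, this introduces $O(nm^4)$ such comparison polynomials, matching the stated count. Within any fixed cell these signs yield a total order of each agent's values over all valid bundles, and hence the set $B_i$ of optimum bundles is the set of bundles achieving the top value; by Corollary~\ref{Lem:Bundlingd} these are exactly the bundles an equilibrium can assign to agent $i$. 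The only subtlety to check is that the sign of $v_{ib} - v_{ib'}$ after clearing denominators flips consistently with the known signs of $p_j - p_k$, which is immediate once the cell fixes those signs; this is the one routine but slightly delicate bookkeeping step.
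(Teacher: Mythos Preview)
Your proposal is correct and follows essentially the same approach as the paper. In the paper, this lemma is presented without a separate proof, as a summary of the preceding discussion in \S\ref{Sec:GoodsBundleChar}; your argument faithfully recapitulates that discussion, and is in fact slightly more careful in making explicit the denominator-clearing step needed to turn the rational comparisons $v_{ib}-v_{ib'}$ into genuine polynomial sign conditions.
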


\subsection{Characterizing the Equilibria}\label{Sec:GoodsEqChar}

In the previous section we described how to define polynomial number of polynomials with constant number of variables (prices) such that each cell specifies all the valid bundles and the optimum bundles of each agent. We also showed how we can use these variables to get formulate $\alpha_b$ ($q_{jk}$) for each bundle $b=(j,k)$. In this section we describe how we can define a set of polynomials whose sign tells us if a given bundling satisfies the equilibrium conditions.

One difficulty is that if we want our algorithm to run in polynomial time, we cannot directly define a variable $\Alloc_{ij}$ for each agent $i$ and each item $j$, since the number of variables needs to remain constant. In order to get around this problem, we define a variable for each bundle instead that shows how much this bundle is used in equilibrium, and use these variables to check equilibrium conditions.

Assume that we have an equilibrium $E(\Allocss,\Prices)$ and its valid bundling $B$. Let $\Allocs_b$ be the total amount of bundle $b \in B$, which is used in this equilibrium. For a subset of agents $S$, let $B_S$ be the union of all the optimum bundles of agents in $S$, i.e. $B_{S}=\bigcup_{i\in S}B_i$. In order to characterize equilibrium bundles with variables $x_b$ and polynomials, we use the following lemma. 

\begin{lemma}\label{Lem:GEChar}
A pricing and its corresponding bundling is an equilibrium if and only if there exists $\Allocss\in \PosReals^{|B|}$ such the following hold
\begin{itemize}
\item[a.] For every subset of agents $S$, we have $\sum_{b \in B_S} \Allocs_b \geq |S|$.
\item[b.] For every item $j$, $\sum_{b:b=(j,k)} \alpha_b\Allocs_b + \sum_{b:b=(k,j)} (1-\alpha_b)\Allocs_b + \sum_{b:b=(j)}\Allocs_b = C_j$.
\end{itemize}
\end{lemma}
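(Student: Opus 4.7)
The plan is to prove both directions of the iff. For the forward direction, I would assume $(\Allocss, \Prices)$ is an equilibrium. Applying Lemma~\ref{Lem:Bundling} rewrites the item allocation as a bundle allocation $y_{ib}$; then setting $\Allocs_b := \sum_i y_{ib}$ gives condition (b) directly from market clearing, since $\sum_i \Alloc_{ij} = C_j$ and substituting the bundle-to-item decomposition yields exactly the expression in (b). Condition (a) follows because by Corollary~\ref{Lem:Bundlingd}, agent $i$ only receives bundles in $B_i$ with $\sum_{b \in B_i} y_{ib} = 1$, so for any $S \subseteq [n]$,
\[
\sum_{b \in B_S} \Allocs_b \;\ge\; \sum_{i \in S} \sum_{b \in B_i} y_{ib} \;=\; |S|.
\]

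For the backward direction, the task is to construct an equilibrium allocation of items to agents from the bundle amounts $\Allocss$. First, I would show $\Allocs_b = 0$ for every $b \notin B_{[n]} := \bigcup_i B_i$: each bundle contains exactly one unit of items, so summing (b) over $j$ gives $\sum_b \Allocs_b = \sum_j C_j = n$, while condition (a) with $S = [n]$ gives $\sum_{b \in B_{[n]}} \Allocs_b \ge n$, forcing $\Allocs_b = 0$ outside $B_{[n]}$. Next, I would invoke a fractional Hall's theorem on the bipartite graph whose left vertices are the agents (each with demand $1$), right vertices are the bundles in $B_{[n]}$ (each with supply $\Allocs_b$), and with an edge $(i,b)$ iff $b \in B_i$: condition (a) is exactly Hall's condition for this supply-demand system, so there exist $y_{ib} \ge 0$ with $y_{ib} > 0 \Rightarrow b \in B_i$, $\sum_b y_{ib} = 1$, and $\sum_i y_{ib} = \Allocs_b$. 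Converting bundles back to items via
\[
\Alloc_{ij} \;=\; \sum_{b:b=(j,k)} \alpha_b y_{ib} + \sum_{b:b=(k,j)} (1-\alpha_b) y_{ib} + \sum_{b:b=(j)} y_{ib}
\]
yields the item allocation $\Allocss$.

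I would then verify the equilibrium conditions on $(\Allocss, \Prices)$: market clearing $\sum_i \Alloc_{ij} = C_j$ follows directly from (b); each agent receives exactly $1$ unit of items (since $\sum_b y_{ib} = 1$ and every bundle contains $1$ unit of items) and spends exactly $1$ unit of money (since every bundle has unit price by Lemma~\ref{Lem:Bundling}); and agent $i$'s allocation uses only bundles in $B_i$, which by definition of $B_i$ are her utility-maximizing bundles at prices $\Prices$ subject to her budget. The main obstacle is justifying the Hall-type matching step; I would formalize it as a max-flow from a source through agents (edge capacity $1$) to their optimum bundles (with bundle capacities $\Allocs_b$) to a sink, where condition (a) together with $\sum_b \Allocs_b = n$ ensures every $s$-$t$ cut has capacity at least $n$, so the max flow equals $n$ and produces the required $y_{ib}$.
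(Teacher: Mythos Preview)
Your proof is correct and follows essentially the same approach as the paper: both directions reduce to a Hall-type argument on the bipartite graph between agents and bundles, with the forward direction coming straight from Lemma~\ref{Lem:Bundling} and market clearing. The only differences are cosmetic---you phrase the Hall step via max-flow/min-cut whereas the paper uses an augmenting-path contradiction, and you are more explicit in deriving $\Allocs_b=0$ for $b\notin B_{[n]}$ and in verifying the market-clearing condition on the reconstructed item allocation, a detail the paper leaves implicit.
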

\begin{proof}
\pjcomment{Added the missing proof }The proof of this lemma is similar to proof of the Hall's theorem. On one side of the bipartite graph we have agents with capacity 1 and on the other side we have the bundles with capacity $x_b$ (for each bundle $b\in B$). Furthermore, there is an edge between bundle $b$ and agent $i$ iff $b \in B_i$.

$\Rightarrow$ This case directly follows from lemma \ref{Lem:Bundling} and the market clearing conditions.

$\Leftarrow$ Assume that there is an $X\in R^{|B|}$ such that a and b hold. We have to show that there exists an allocation of bundles to agents such that every agent gets exactly 1 unit of her optimum bundles. We use contradiction. Assume that this allocation does not exists. Let $Y$ be an allocation which allocates maximum total amount of bundles to agents among all the valid allocations. Let $i$ be an agent for which $Y$ allocates less than 1 unit of her optimum bundles. If there is an augmenting path from $i$ to a bundle $b$ which is not fully consumed, then we can force the agents along this path to deviate and increase the allocation of $i$ without changing the total allocation, cost and utility of other agents. This is a contradiction. Now assume that an augmenting path from $i$ does not exist. Let $S$ be the set of agents and $T$ be the set of bundles which are reachable from $i$. Since all the bundles in $T$ have been fully consumed by agents in $S$, and each agent in $S$ has at most 1 unit of bundles in $T$, we have
$$\sum_{b \in B_S} x_b < |S|$$
This is a contradiction, and we are done.
\end{proof}

The problem with lemma \ref{Lem:GEChar} is that in order to check whether the first set of conditions hold, for every subset of agents we have to define a polynomial and there are exponentially many of these subsets. For a subset of bundles $T$, let $A(T)$ be the set of agents that for every agent $i$, $i \in A(T)$ if and only if $B_i \subseteq T$. In order to fix the problem, we prove the following lemma.

\begin{lemma}\label{Lem:GEEquiv}
For every subset of agents $S$, we have $\sum_{b \in B_S} x_b \geq |S|$ if and only if for every subset of bundles $T$, $\sum_{b \in T} x_b \geq |A(T)|$.
\end{lemma}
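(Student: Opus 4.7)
The plan is to prove both directions by carefully choosing the test set on each side via the natural Galois-like correspondence between subsets of agents and subsets of bundles. The key observations are: (i) for any set of agents $S$, every agent $i \in S$ satisfies $B_i \subseteq B_S$, so $S \subseteq A(B_S)$; and (ii) for any set of bundles $T$, every agent $i \in A(T)$ satisfies $B_i \subseteq T$, so $B_{A(T)} \subseteq T$.

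For the $(\Rightarrow)$ direction, I would fix an arbitrary subset of bundles $T$ and set $S := A(T)$. By definition of $A(T)$, each $i \in S$ has $B_i \subseteq T$, so $B_S = \bigcup_{i \in A(T)} B_i \subseteq T$. Since the $x_b$ are non-negative, this inclusion gives $\sum_{b \in T} x_b \geq \sum_{b \in B_S} x_b$, and the hypothesis applied to $S$ yields $\sum_{b \in B_S} x_b \geq |S| = |A(T)|$. Chaining the two inequalities proves the desired bound for $T$.

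For the $(\Leftarrow)$ direction, I would fix an arbitrary subset of agents $S$ and set $T := B_S$. Observation (i) above shows $S \subseteq A(B_S) = A(T)$, hence $|S| \leq |A(T)|$. The hypothesis applied to $T = B_S$ gives $\sum_{b \in B_S} x_b = \sum_{b \in T} x_b \geq |A(T)| \geq |S|$, which is exactly the condition we need for $S$.

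There is no real obstacle here — both directions follow immediately from the closure properties of the maps $S \mapsto B_S$ and $T \mapsto A(T)$, combined with non-negativity of $x_b$. The only thing to be a little careful about is the direction of the inclusions: $B_{A(T)} \subseteq T$ (not equality in general, since some bundles in $T$ may fail to be an optimum bundle for any agent whose entire $B_i$ sits in $T$), and $S \subseteq A(B_S)$ (strict when there exist agents outside $S$ whose optimum bundles all happen to lie in $B_S$). These one-sided inclusions are, however, exactly what is needed, and monotonicity of $\sum_{b \in \cdot} x_b$ does the rest.
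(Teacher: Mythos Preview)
Your proof is correct and essentially identical to the paper's own argument: in both directions you pick the same test sets ($S := A(T)$ for $\Rightarrow$, $T := B_S$ for $\Leftarrow$) and use the same one-sided inclusions $B_{A(T)} \subseteq T$ and $S \subseteq A(B_S)$ together with non-negativity of $x_b$. The paper's proof is just a terser write-up of exactly what you wrote.
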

\begin{proof}

$\Rightarrow$ Let $S=A(T)$. By definition we know that $B_S\subseteq T$. So we have
$$|A(T)| = |S| \leq \sum_{b \in B_S} \Allocs_b \leq \sum_{b \in T} \Allocs_b$$

$\Leftarrow$ Let $T=B_S$. By definition we know that $S\subseteq A(T)$. So we have
$$|S| \leq |A(T)| \leq \sum_{b \in T} x_b = \sum_{b \in B_S} \Allocs_b$$
\end{proof}

Note that since the number of goods is constant, the number of subsets of bundles is also constant. Lemma \ref{Lem:GEEquiv} shows that when we know the bundles and optimum bundles of all agents, we can find out whether the the first condition of lemma \ref{Lem:GEChar} holds by defining a variable for each bundle and checking polynomial number of inequalities. It is also easy to see that we can check the second set of conditions by defining a polynomial (whose degree is a function of number of bundles) for each agent, using the formula from the previous section for each $\alpha_b$.

We summarize the arguments in this section in the following lemma
\begin{lemma}\label{Lem:SecGEqChar}
For each output cell of theorem \ref{THM:CellDecomp}, defined by variables and polynomials in the previous section and above $O(m^2)$ variables and $O(2^{m^2})$ polynomials, we can use the sign of the defined polynomials to see if the prices in that cell are equilibrium prices.
\end{lemma}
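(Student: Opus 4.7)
The plan is to extend the cell decomposition from Section~\ref{Sec:GoodsBundleChar} by introducing new variables and polynomials whose sign pattern certifies whether a cell's pricing admits a feasible equilibrium allocation. Recall that by Lemma~\ref{Lem:SecGBun} any cell from Section~\ref{Sec:GoodsBundleChar} already fixes the valid bundling $B$ (with $|B|\leq \binom{m}{2}+m=O(m^2)$ since each bundle has at most two items) and each agent's optimum set $B_i$; consequently, for every $T\subseteq B$ the integer $|A(T)|=|\{i:B_i\subseteq T\}|$ is determined inside the cell. Thus the task reduces to encoding, as polynomial sign conditions, the two clauses of Lemma~\ref{Lem:GEChar} after replacing the exponential-in-$n$ first clause by the equivalent bundle-side condition from Lemma~\ref{Lem:GEEquiv}.

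First I would introduce one variable $x_b$ for every candidate bundle $b\in B$, for a total of $O(m^2)$ new variables representing the mass of each bundle consumed in the claimed equilibrium. Then I would add three families of polynomials in the joint space of prices and $x_b$'s: (i) the $O(m^2)$ monomials $x_b$ themselves, whose signs test non-negativity; (ii) for each of the $m$ items $j$, a polynomial encoding the capacity equation $\sum_{b:j\in b}\mathrm{coeff}_b\cdot x_b = C_j$ from Lemma~\ref{Lem:GEChar}(b), where $\mathrm{coeff}_b\in\{\alpha_b,1-\alpha_b,1\}$ depending on the role of $j$ inside $b$; (iii) for every subset $T\subseteq B$, the linear polynomial $\sum_{b\in T}x_b-|A(T)|$, which tests the bundle-side inequality of Lemma~\ref{Lem:GEEquiv}. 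Since there are at most $2^{|B|}=O(2^{m^2})$ subsets, family (iii) dominates and matches the polynomial count announced in the statement.

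By Lemma~\ref{Lem:GEChar}, instantiated on the bundle-side by Lemma~\ref{Lem:GEEquiv}, the prices in a cell are equilibrium prices if and only if there is a point of the cell at which all polynomials of families (i) and (iii) are non-negative and every polynomial of family (ii) vanishes. Running Theorem~\ref{THM:CellDecomp} on the joint price and $x$-space with the combined polynomial list, and accepting exactly the output cells whose prescribed sign pattern is realized, therefore decides which cells correspond to equilibrium pricings; extracting any point of an accepted cell yields equilibrium prices together with a witnessing bundle allocation.

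The main obstacle is family (ii): the coefficients $\alpha_b=q_{jk}=(1-p_k)/(p_j-p_k)$ are rational functions of the prices rather than polynomials. The saving grace is that the sign of every $p_j-p_k$ has already been pinned down inside the cell by the polynomials introduced in Section~\ref{Sec:GoodsBundleChar}. Multiplying each capacity equation through by the appropriate product of denominators $(p_j-p_k)$ converts it into a genuine polynomial whose zero-set agrees with the original linear equation, and whose overall sign is constant on the cell and hence harmless for deciding equality. Degrees remain bounded by a function of $m$, so Theorem~\ref{THM:CellDecomp} still enumerates all resulting cells in time polynomial in $n$ for fixed $m$, giving the claimed certification procedure.
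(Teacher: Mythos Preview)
Your proposal is correct and matches the approach the paper lays out in the text preceding the lemma: introduce a variable $x_b$ per bundle, encode non-negativity, the item-capacity equalities of Lemma~\ref{Lem:GEChar}(b) (after clearing the $(p_j-p_k)$ denominators in $\alpha_b$, whose signs are already fixed in the cell), and the $2^{|B|}$ bundle-side Hall inequalities of Lemma~\ref{Lem:GEEquiv}. The paper's formal proof block for this lemma is actually a verbatim duplicate of the Hall/augmenting-path argument that proves Lemma~\ref{Lem:GEChar}, not a proof of the present statement; your write-up is in fact closer to an actual proof of Lemma~\ref{Lem:SecGEqChar} than what appears in the paper.
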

\begin{proof}
The proof of this lemma is similar to proof of the Hall's theorem. On one side of the bipartite graph we have agents with capacity 1 and on the other side we have the bundles with capacity $\Allocs_b$ (for each bundle $b\in B$). Furthermore, there is an edge between bundle $b$ and agent $i$ iff $b \in B_i$.

$\Rightarrow$ This case directly follows from lemma \ref{Lem:Bundling} and the market clearing conditions.

$\Leftarrow$ Assume that there is an $\Allocss\in \PosReals^{|B|}$ such that a and b hold. We have to show that there exists an allocation of bundles to agents such that every agent gets exactly 1 unit of her optimum bundles. We use contradiction. Assume that this allocation does not exists. Let $Y$ be an allocation which allocates maximum total amount of bundles to agents among all the valid allocations. Let $i$ be an agent for which $Y$ allocates less than 1 unit of her optimum bundles. If there is an augmenting path from $i$ to a bundle $b$ which is not fully consumed, then we can force the agents along this path to deviate and increase the allocation of $i$ without changing the total allocation, cost and utility of other agents. This is a contradiction. Now assume that an augmenting path from $i$ does not exist. Let $S$ be the set of agents and $T$ be the set of bundles which are reachable from $i$. Since all the bundles in $T$ have been fully consumed by agents in $S$, and each agent in $S$ has at most 1 unit of bundles in $T$, we have
$$\sum_{b \in B_S} \Allocs_b < |S|$$
This is a contradiction, and we are done.
\end{proof}

\subsection{Finding an Equilibrium}\label{Sec:GoodsFindingEq}

So far, we have characterized the equilibria by defining constant number of variables and polynomial number of polynomials. Now we put the pieces together to prove theorem \ref{THM:GoodsMain}.

\textbf{Proof of Theorem \ref{THM:GoodsMain}. } We use theorem \ref{THM:CellDecomp} with variables and polynomials defined in section \ref{Sec:GoodsBundleChar} and \ref{Sec:GoodsEqChar}. Theorem \ref{THM:GoodsMain} iterates all the feasible cells of these polynomials. Each cell gives us the sign of all these polynomials. The signs of these polynomials tell us
\begin{enumerate}
\item What are the bundles and optimum bundles for each agent, by lemma \ref{Lem:SecGBun}.
\item Whether these bundles are associated with a market equilibrium, by lemma \ref{Lem:SecGEqChar}.
\end{enumerate}

Finally, if a bundles associated with a cell is characterizing an equilibrium bundling, we sample a set of prices from that cell. Now that we know equilibrium prices, it is easy to find the bundles as before. We can also find each agent's allocation by finding a solution of the following set of inequalities

\begin{align*}
&\sum_{b \in B_i}\Alloc_{ib}=1 &\forall i\in[n]\\
&\sum_{i}\big(\sum_{b:b=(j,k)}\alpha_b\Alloc_{ib} + \sum_{b:b=(k,j)}(1-\alpha_b)\Alloc_{ib} + \sum_{b:b=(j)} \Alloc_{ib} \big) =C_j &\forall j\in[m]\\
&\Alloc_{ib} \geq 0 &\forall i\in[n], b\in B\\
\end{align*}

In which, $\Alloc_{bi}$ is the amount of bundle $b$ used by agent $i$. Note that since we know this bundling is associated with an equilibrium, the feasible region of the above inequalities is not empty. Finally, the allocation of each agent $i$ for item $j$ in this equilibrium is
$$\Alloc_{ij}=\sum_{b:b=(j,k)}\alpha_b\Alloc_{ib} + \sum_{b:b=(k,j)}(1-\alpha_b)\Alloc_{ib} + \sum_{b:b=(j)} \Alloc_{ib}$$

Note that since theorem \ref{THM:CellDecomp} iterates over all the possible feasible cells, if the equilibrium exists, our method finds it. Furthermore, our method can characterize all the equilibrium cells.

Since the number of variables is constant ($O(m^2)$), the number of polynomials is polynomial in the number of agents ($O(2^{m^2} + nm^4)$) and all the polynomials that we define have constant degree ($O(m^2)$), our algorithm runs in polynomial time.

\section{Examples}
\subsection{Examples from the Preliminaries} \label{ap:examples}
\paragraph{Interim Pareto-efficiency of RSD}
It is well known that \RSD{} is ex post Pareto-efficient \etedit{(efficient after the random choices are made), but not interim Pareto-efficient. From the perspective of a fixed agent, ordinal preferences are not always sufficient for ranking the interim outcomes of a mechanism
(i.e., ranking of distributions over outcomes), even though they are sufficient for ranking the ex-post outcomes. However, in the example below, ordinal preferences are enough to show that the interim outcomes are not Pareto-efficient. Consider} the following example with agents $\AgentID{1},\ldots,\AgentID{4}$ and items
$\ItemID{1},\ldots,\ItemID{4}$ in which the preferences of the agents are
indicated on the edges (only the first two top choices of each agent).
\begin{center}
\begin{tikzpicture}[marketstyle]
    \def\marketsize{4};
    \foreach \i in {1,...,\marketsize}
        \node["\AgentID{\i}"below,circle,fill] (agent_\i) at (\i,0) {};
    \foreach \j in {1,...,\marketsize}
        \node["\ItemID{\j}"above,circle,fill] (item_\j) at (\j,1) {};

    \draw(agent_1)
        edge["1st"] (item_1)
        edge["2nd"] (item_2);
    \draw(agent_2)
        edge["2nd"] (item_1)
        edge["1st"] (item_2);
    \draw(agent_3)
        edge["1st"] (item_1)
        edge["2nd"] (item_2);
    \draw(agent_4)
        edge["2nd"] (item_1)
        edge["1st"] (item_2);
\end{tikzpicture}
\end{center}
Let $\Alloc_{ij}$ denote the probability that agent $i$ gets item $j$ assuming
the items are allocation using \RSD. Notice that \AgentID{1} gets \ItemID{2}
when the agents are served in the order
$\AgentID{3},\AgentID{1},\AgentID{2},\AgentID{4}$. Similarly $\AgentID{2}$ gets
$\ItemID{1}$ when the agents are served in the order
$\AgentID{4},\AgentID{2},\AgentID{1},\AgentID{3}$. Therefore both $\Alloc_{12}$
and $\Alloc_{21}$ are strictly positive. However that implies the allocation is
not interim Pareto-efficient because \AgentID{1} and \AgentID{2} would both
strictly benefit by exchanging some fraction of their allocation of \ItemID{1}
and \ItemID{2}.

\paragraph{Pareto-efficiency with cardinal versus ordinal preferences}
\etedit{Cardinal and ordinal preferences are different in evaluating outcomes. In the next} example %is an instance where 
the \RSD{} interim allocations is interim Pareto inefficient with cardinal preferences, even though with the corresponding ordinal preferences would not have been interim Pareto ifefficient. Consider the following example in which the valuation of each agent for
each item is specified on the corresponding edge and $\epsilon < 0.5$.
\begin{center}
\begin{tikzpicture}[marketstyle]
    \def\marketsize{3};
    \foreach \i in {1,...,\marketsize}
        \node["\AgentID{\i}"below,circle,fill] (agent_\i) at (\i,0) {};
    \foreach \j in {1,...,\marketsize}
        \node["\ItemID{\j}"above,circle,fill] (item_\j) at (\j,1) {};
    \draw(agent_1)
        edge["$1$"] (item_1)
        edge["$\epsilon$"] (item_2)
        edge["$0$"] (item_3);
    \draw(agent_2)
        edge["$1$"] (item_1)
        edge["$1-\epsilon$"] (item_2)
        edge["$0$"] (item_3);
    \draw(agent_3)
        edge["$1$"] (item_1)
        edge["$1-\epsilon$"] (item_2)
        edge["$0$"] (item_3);
\end{tikzpicture}
\end{center}
All of the agents in the above example have the same ordinal preferences, but
different cardinal preferences. Running the $\RSD$ mechanism would yield an
allocation of $\Alloc_i=(\frac{1}{3},\frac{1}{3},\frac{1}{3})$ for each agent $i
\in \RangeN{3}$ which would have been Pareto-efficient for the corresponding
ordinal preferences. However this allocation is not Pareto efficient for the
above cardinal preferences because it is strictly Pareto dominated by the
following allocation for any $\lambda \in \RangeAB{0}{1}$:
\begin{align*}
    \Allocs_1
        &= \bigl(\frac{1}{3}+\frac{\lambda(1-\epsilon)+(1-\lambda)\epsilon}{3}, 0, \frac{1}{3}+\frac{\lambda\epsilon+(1-\lambda)(1-\epsilon)}{3}\bigr), \\
    \Allocs_2 = \Allocs_3
        &= \bigl(\frac{1}{3}-\frac{\lambda(1-\epsilon)+(1-\lambda)\epsilon}{6}, \frac{1}{2}, \frac{1}{3}-\frac{\lambda\epsilon+(1-\lambda)(1-\epsilon)}{3}\bigr).
\end{align*}
Basically the interim \RSD{} allocation can be Pareto improved by agent 1
trading its whole share of item 2 with agent 2 and 3 in return for a smaller
share of item 1 where item 3 is just traded as a \emph{dummy item} to allow the
total allocation of each agent to remain equal to $1$.

%\paragraph{Monotone Preferences}
%This example shows that \etedit{the natural extension of the matching problem  to allow agents to have free disposal, i.e., if assigned more than 1 unit of item, they value the best unit, changes the structure of the problem.} %agent preferences are not equivalent to matching market preferences.
%Consider the following market with 2 agents and 2 items.

%\begin{center}
%\begin{tikzpicture}[marketstyle]
%    \def\marketsize{2};
%    \foreach \i in {1,...,\marketsize}
%        \node["\AgentID{\i}"below,circle,fill] (agent_\i) at (\i,0) {};
%    \foreach \j in {1,...,\marketsize}
%        \node["\ItemID{\j}"above,circle,fill] (item_\j) at (\j,1) {};
%    \draw(agent_1)
%        edge["$2$"] (item_1)
%        edge["$1$"] (item_2);
%    \draw(agent_2)
%        edge["$0$"] (item_1)
%        edge["$1$"] (item_2);
%\end{tikzpicture}
%\end{center}
%Let $v_1=(2,1)$ and $v_2=(0,1)$. Assume that the prices are $\Prices=(0.5,1.5)$. When agent's \etedit{have} free disposal, these prices are equilibrium, since $\Allocs_1=(1,\frac{1}{3})$ and $\Alloc_2=(0,\frac{2}{3})$ is optimal for both agents, and also clears the market. However, with matching market preferences these prices are not equilibrium, since $I_1$ will not get any of $A_2$ and $I_2$ cannot afford all of $A_2$ so the prices are not market clearing.

\paragraph{Odd Demand Structure}
As mentioned in the Introduction, the agents demand in the matching problem exhibits
some peculiar properties, that make the market equilibrium problem challenging to solve.
The demand function
does not satisfy the gross substitutability condition: increasing the price of
an item (while keeping all other prices fixed) may actually increase the demand
of the agent for that item. Here we provide an example of this phenomenon. Let $\Val_{ij} \in \SPosReals$ denote the valuation
of agent $i \in \RangeN{\NumAgents}$ for item $j \in \RangeN{\NumItems}$. Given
a price vector $\Prices \in \PosReals^\NumItems$, the optimal demand bundles of
agent $i$ are the solutions of the following linear program whose objective
value is the optimal expected utility of the agent:
\begin{align}
    &\text{maximize}   & \sum_j & \Val_{ij} \Alloc_{ij} \label[program]{lp:util} \\
    &\text{subject to}  & \sum_j \Alloc_{ij} \Price_j & \le 1 \tag{$\alpha_i$} \\
                    &   & \sum_j \Alloc_{ij} & \le 1 \tag{$\beta_i$} \\
                    &   & \Alloc_{ij} &\ge 0 & &\forall j \in \RangeN{\NumItems}. \notag
\end{align}
Let $\alpha_i \in \PosReals$ and $\beta_i \in \Reals$ be the dual variables
corresponding to the first and second constraint respectively. The dual of the
above program is as follows:
\begin{align}
    &\text{minimize}   & \alpha_i &+ \beta_i \\
    &\text{subject to}  & \alpha_i \Price_j + \beta_i &\ge \Val_{ij}, & &\forall j \in \RangeN{\NumItems}  \tag{$\Alloc_{ij}$}\\
                    &   & \alpha_i, \beta_i &\ge 0. \notag
\end{align}
It is easy to see that $\beta_i = \max_j (\Val_{ij}-\alpha_i\Price_j)$ in any
optimal assignment. By eliminating $\beta_i$ from the above program and
rearranging the terms we can derive the optimal utility of agent $i$ as a
function of the prices as
\begin{align}
    \Util_i(\Prices) &= \min_{\alpha_i \ge 0} \DualUtil_i(\alpha_i,\Prices), \label{eq:util:min} \\
\intertext{in which}
        \DualUtil_i(\alpha_i,\Prices) &= \alpha_i+{\max_j}^+ \left(\Val_{ij}-\alpha_i\Price_j\right).
        \label{eq:util:max}
\end{align}
The $\max^+$ operator is taken over linear functions of $\alpha_i$, therefore
$\DualUtil_i(\alpha_i,\Prices)$ is piecewise linear and convex in $\alpha_i$ for
every fixed $\Prices$. Consequently $\Util_i(\Prices)$ can be computed easily
using a linear search over $\alpha_i$ by checking only the endpoints of the
linear pieces.

Next, we compute the optimal fractional demand bundle $\Allocs_i$ of agent $i$
at price vector $\Prices$. Let $\DemandSup_i(\Prices)$ be the set of all items
$j$ which are maximizers of \cref{eq:util:max} in which $\alpha_i$ is a
minimizer of \cref{eq:util:min}. Assuming $\alpha_i$ is non-zero, the minimum of
$\DualUtil_i(\alpha_i, \Price)$ takes place either at the intersection of a
linear piece with negative slope and a linear piece with a positive slope, or on
a linear piece with a zero slope. The slope of the line segment corresponding to
item $j$ is $1-\Price_j$. Therefore, there must exist
$j',j'' \in \DemandSup_i(\Prices)$ such that
$\Price_{j'} \ge 1 \ge \Price_{j''}$. ~\footnote{Note that $j'$ and $j''$ might
  be the same if $\Price_{j'}=1$.} In general $\DemandSup_i(\Prices)$ may
include more than two items, however in the most common case we have
$\DemandSup_i(\Prices)=\{j',j''\}$ where $\Price_{j'} > 1 >
\Price_{j''}$. Notice that $\Alloc_{ij}$ can be non-zero only if
$j \in \DemandSup_i(\Prices)$. Furthermore complementary slackness implies that
if $\alpha_i > 0$, then the budget must be completely spent. So we have two
possibly non-zero variables (i.e., $\Alloc_{ij'},\Alloc_{ij''}$) and two tight
constraints (i.e., the first and second constraints in~\cref{lp:util} which can
be solved to get
\begin{align*}
    \Alloc_{ij'}    &= \frac{1 - \Price_{j''}}{\Price_{j'}-\Price_{j''}}, \\
    \Alloc_{ij''}   &= \frac{\Price_{j'}-1}{\Price_{j'}-\Price_{j''}}.
\end{align*}

Next we show that the agent's utility function does not satisfy the \emph{weak
  gross substitutability} condition. Suppose $\Price_{j''}$ is increased by a
positive amount which is small enough not to cause $\DemandSup_i(\Prices)$ to
change. The derivative of $\Alloc_{ij'}$ with respect to $\Price_{j''}$ is
\begin{align*}
  \od{\Alloc_{ij'}}{\Price_{j''}} &= \frac{1-\Price_{j'}}
                                    {(\Price_{j'}-\Price_{j''})^2} < 1
\end{align*}
which means increasing the price of item $j''$ would decrease the agent's demand
of item $j'$ which violates the gross substitutability
requirement. Surprisingly, that also implies increasing the price of item $j''$
must increase the demand for item $j''$ itself because
$\Alloc_{ij'}+\Alloc_{ij''}=1$.

%\section{Missing Proofs of Section \ref{Sec:FixedAgents}}%\label{Sec:ApAgents}

\section{Existence of Equilibria}
\label{sec:existence}
In this \lcnamecref{sec:existence} we prove the existence of a market
equilibrium by constructing a corresponding concave game with convex externality
constraints as formally defined in \cref{sec:game} such that any equilibrium of
the concave game corresponds to a market equilibrium. We then show that this
game satisfies the requirements of \cref{res:ngame} which implies it has an
equilibrium, hence a market equilibrium exists.

Given a market with $\NumAgents$ unit demand agents and $\NumItems$ items with
$\ItemCount_j$ copies for each item $j \in \RangeN{\NumItems}$, we define an
$\NumAgents+1$ player concave game as follows. For each
$i \in \RangeN{\NumItems}$, player $i$ corresponds to agent $i$ in the market
and chooses $\Allocs_i \in \PosReals^\NumItems$ to optimize the following linear
program:
\begin{align}
    &\text{maximize}   & \sum_j & \Val_{ij} \Alloc_{ij} \label[program]{lp:util2} \\
    &\text{subject to}  & \sum_j \Alloc_{ij} \Price_j & \le 1 \notag \\
                    &   & \sum_j \Alloc_{ij} &\le 1 \notag  \\ % \tag{$\beta_i$}
                    &   & \Alloc_{ij} &\ge 0 & &\forall j \in \RangeN{\NumItems}. \notag
\end{align}
Player $0$ corresponds the market maker and chooses the price vector
$\Prices \in \PosReals^\NumItems$ to optimize the following linear program:
\begin{align}
    &\text{maximize}   & \sum_j &(\sum_i \Alloc_{ij} - \ItemCount_j) \Price_j \label[program]{lp:price} \\
    &\text{subject to}  & \sum_j \ItemCount_j \Price_j &\le \NumAgents \notag \\
                    &   & \Price_j &\ge 0 & &\forall j \in \RangeN{\NumItems}. \notag
\end{align}

\begin{lemma}
\label{res:reduction}
Any equilibrium of the above game corresponds to an equilibrium of the original market.
\end{lemma}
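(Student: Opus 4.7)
The plan is to verify that any equilibrium $(x^*, p^*)$ of the $n+1$-player concave game satisfies the three defining conditions of a market equilibrium: each agent's allocation is optimal at the prevailing prices, market clearing, and the matching constraint $\sum_j x^*_{ij} = 1$ for every agent. Throughout I write $S_j := \sum_i x^*_{ij}$.

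Agent optimality is essentially free: by the definition of the game equilibrium each agent $i$'s strategy solves \cref{lp:util2}, which coincides with the market's agent optimization problem except that the matching constraint appears as $\sum_j x_{ij} \le 1$ rather than $=1$. I defer the question of tightness to the last step.

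The core step is to show simultaneously that $S_j \le C_j$ for every $j$ and $p^*_j(C_j - S_j) = 0$ for every $j$, which together express market clearing. For the first inequality, I would argue by contradiction. Suppose $S_{j_0} > C_{j_0}$ for some item $j_0$. Then the market maker could deviate to the feasible price vector $p'$ with $p'_{j_0} = n/C_{j_0}$ and $p'_j = 0$ for $j \ne j_0$, which yields a strictly positive payoff $(S_{j_0} - C_{j_0})\,n/C_{j_0}$ in \cref{lp:price}. By the market maker's best-response condition, the payoff at $p^*$ is at least this positive amount, and since the objective in \cref{lp:price} is linear in $p$, any optimum with positive value must saturate the budget, so $\sum_j C_j p^*_j = n$. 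On the other hand, summing each agent's budget constraint across agents gives $\sum_j p^*_j S_j \le n$, so $\sum_j(S_j - C_j) p^*_j \le 0$, contradicting the strict positivity of the market maker's payoff. Once $S_j \le C_j$ is established for every $j$, each coefficient $(S_j - C_j)$ in the market maker's objective is non-positive, so maximization forces $p^*_j = 0$ whenever $S_j < C_j$; equivalently, $p^*_j > 0 \Rightarrow S_j = C_j$, which is precisely market clearing.

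Finally, I handle the matching constraint. Summing $S_j \le C_j$ and using $\sum_j C_j = n$ yields $\sum_i \sum_j x^*_{ij} \le n$, while each agent already satisfies $\sum_j x^*_{ij} \le 1$. If some agent $i$ had $\sum_j x^*_{ij} < 1$, then $\sum_j S_j < n$ would force some item $j_1$ with $S_{j_1} < C_{j_1}$, and by the previous step $p^*_{j_1} = 0$. Adding any such free item to agent $i$'s bundle preserves budget feasibility and strictly increases her utility whenever $v_{i j_1} > 0$, contradicting her optimality; when $v_{i j_1} = 0$ she is indifferent, so we may redistribute unsold capacity of free items among agents with slack to produce an allocation $x'$ with every matching constraint tight, unchanged utilities, and the supply bounds $S_j \le C_j$ still satisfied. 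The main delicate point I expect to manage carefully is precisely this final redistribution step, because the game's agent LP only relaxes the matching constraint to an inequality, so I must convert a possibly slack $x^*$ into the tight allocation $x'$ demanded by the market equilibrium definition without disturbing either the prices or the market clearing structure.
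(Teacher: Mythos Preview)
Your argument is correct and mirrors the paper's proof: rule out over-allocation by noting it would give the market maker strictly positive optimal value with a tight price-budget constraint, contradicting the agents' aggregate spending bound of $n$; then under-allocated items must be free, so agents fill up on them. You are simply more explicit than the paper about the zero-valuation corner case and the redistribution needed to tighten the matching constraints, which the paper compresses into the single clause ``hence all agents must be fully allocated.''
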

\begin{proof}
  It is easy to see that in any equilibrium of the game the action of player
  $i$, $\Allocs_i$, is an optimal allocation for agent $i$ of the original
  market with respect to price vector $\Prices$.  So we only need to verify that
  the market clearing conditions are also satisfied. It is easy to see that if
  there are over-allocated items, then player $0$ (market maker) should have
  chosen a price vector $\Prices$ that would make the object value of
  \cref{lp:price} strictly positive and would also make the first constraint
  tight. But a strictly positive objective means the total amount of money spend
  by all agents, $\sum_j (\sum_i \Alloc_{ij} \Price_j)$, is more than
  $\sum_j (\sum_i \ItemCount_j \Price_j)$ which is itself equal to $\NumAgents$
  by tightness of the first constraint, hence a contradiction.  On the other
  hand, if there is any under-allocated item, the market maker should have
  assigned a price of $0$ to that item, hence all agents must be fully
  allocated.
\end{proof}

\begin{theorem}
\label{res:existence}
A market equilibrium for the original market always exists.
\end{theorem}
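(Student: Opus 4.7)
The plan is to combine \cref{res:reduction} with the general existence result \cref{res:ngame} for concave games with convex externality constraints. By \cref{res:reduction} it suffices to exhibit an equilibrium of the $(\NumAgents+1)$-player game defined above; the hard part is only checking that the hypotheses of \cref{res:ngame} are satisfied by this particular game. So the proof will be short: assemble the feasibility, compactness, convexity, concavity, and continuity conditions, then invoke the cited theorem.

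First, I would argue that each player's action set (parametrized by the other players' actions) is non-empty, compact, and convex. For each agent $i \in \RangeN{\NumAgents}$, the feasible set of \cref{lp:util2} as a function of $\Prices$ is the intersection of the simplex $\{\Allocs_i \in \PosReals^{\NumItems} : \sum_j \Alloc_{ij} \le 1\}$ with the budget half-space $\sum_j \Alloc_{ij} \Price_j \le 1$; since $\Allocs_i = 0$ always lies in this set it is non-empty, and the simplex already makes it compact and convex irrespective of $\Prices \ge 0$. For player $0$, the feasible set $\{\Prices \in \PosReals^{\NumItems} : \sum_j \ItemCount_j \Price_j \le \NumAgents\}$ is independent of the agents' actions and is itself non-empty, compact, and convex. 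Thus the externality constraints are convex in the required sense.

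Next, I would observe that each player's objective is a bilinear form in the joint action profile: linear in the player's own action with coefficients that are linear (hence continuous) functions of the others' actions. In particular, the objective of each agent $i$ is $\sum_j \Val_{ij} \Alloc_{ij}$, which does not even depend on the other variables and is linear, hence concave, in $\Allocs_i$. The objective of player $0$ is $\sum_j (\sum_i \Alloc_{ij} - \ItemCount_j) \Price_j$, which is linear, hence concave, in $\Prices$ for every fixed $\Allocss$. Joint continuity on the product of the action sets is immediate from bilinearity.

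Having verified all hypotheses, \cref{res:ngame} yields a pure-strategy equilibrium $(\Allocss^*, \Prices^*)$ of the constructed game, and then \cref{res:reduction} turns this into a market equilibrium of the original matching market, proving \cref{res:existence}. The only step that requires any care is making sure player $0$'s constraint $\sum_j \ItemCount_j \Price_j \le \NumAgents$ is genuinely without loss of generality — the intuition is that at a market equilibrium total expenditure is at most $\NumAgents$, so restricting prices to this simplex does not discard any candidate equilibrium; this is the one subtlety where the matching-market normalization $\sum_j \ItemCount_j = \NumAgents$ comes in.
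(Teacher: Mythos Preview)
Your overall plan---combine \cref{res:reduction} with \cref{res:ngame}---is exactly the paper's approach. The gap is that you verify only the \emph{framework} conditions of \Sref{sec:game} (compact convex action sets, concave continuous objectives) and then assert ``having verified all hypotheses.'' But the actual hypothesis of \cref{res:ngame} is the Slater-type condition: for every player there must be a \emph{default action} that satisfies the externality constraints $\ConsFun_{ik}(\cdot,\Allocss_{-i})<0$ \emph{strictly}, uniformly over all $\Allocss_{-i}$. You never check this. The paper's proof is nothing but this check: it sets $\epsilon=1/(\NumAgents\sum_j\ItemCount_j)$ and takes $\Price_j^0=\epsilon$ for the market maker and $\Alloc_{ij}^0=\epsilon$ for each agent, then observes that the budget constraint and the price-simplex constraint hold with strict inequality for every choice of the other players. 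Your $\Allocs_i=0$ (used only to argue non-emptiness) would in fact also serve as a default action for agent $i$, since $\sum_j 0\cdot\Price_j=0<1$ for every $\Prices$, but you need to say so and you need an analogous default for player~$0$.

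Your final paragraph about the constraint $\sum_j \ItemCount_j\Price_j\le\NumAgents$ being ``without loss of generality'' is a red herring. \cref{res:reduction} already shows that \emph{any} equilibrium of the constructed game yields a market equilibrium; you are proving existence, not that every market equilibrium arises from this game, so there is nothing to justify here.
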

\begin{proof}
  By \cref{res:reduction} an equilibrium for the original market exists if the
  corresponding concave game has an equilibrium. By \cref{res:ngame} a concave
  game has an equilibrium if every player has a default action that lies
  strictly inside the feasible set of actions for that player for all possible
  actions of other players. Let $\epsilon = 1/(\NumAgents\sum_j
  \ItemCount_j)$. A default action for player $0$ is given by
  $\Price^0_j = \epsilon$ for every $j\in \RangeN{\NumItems}$. A default action
  for each player $i\in \RangeN{\NumAgents}$ is given by
  $\Allocs^0_{ij} = \epsilon$ for all $j\in \RangeN{\NumItems}$.
\end{proof}

\subsection{Multi Agent Concave Games with Externality Constraints}
\label{sec:game}
In this \lcnamecref{sec:game} we prove more generally the existence of an equilibrium for a
general class of games in which both the utility function and the set of
feasible actions for a player may depend on the actions of the other
players. Formally, suppose there are $\NumAgents$ players and the optimal utility
of player $i \in \RangeN{\NumAgents}$ is captured by the following convex program
\begin{align}
    &\text{maximize}   & \Val_i(\Allocs_i,\Allocss_{-i}) \label[program]{cp:util} \\
    &\text{subject to}  & \ConsFun_{ik}(\Allocs_i,\Allocss_{-i}) &\le 0 & &\forall k \in \RangeN{\NumCons_i} & \notag \\
                    &   & \Allocs_i & \in \AllocSpace_i \subseteq \Reals^{\AgentDim_i} \notag
\end{align}
in which $\Allocs_i$ is the action of player $i$; $\Allocss_{-i}$ is the vector
of actions of players other than $i$; $\Val_i$ is the utility function of player
$i$ which is concave in $\Allocs_i$ and continuous in all arguments;
$\AllocSpace_i \subset \Reals^{\NumItems_i}$ is a compact and convex set
representing the feasible actions of player $i$; and $\ConsFun_{ik}$ is convex in
$\Allocs_i$ and continuous in all arguments. Define
$\AllocSpaces = \AllocSpace_1\times\cdots\times\AllocSpace_\N$. A vector of
actions $\Allocss=(\Allocs_1,\ldots,\Allocs_\N) \in \AllocSpaces$ is an
\emph{equilibrium} if{f} $\Allocs_i$ is an optimal assignment for \cref{cp:util}
for each $i \in \RangeN{\N}$. The next theorem establishes that an equilibrium
always exists under a mild assumption.

\begin{theorem}
\label{res:ngame}
If for each player $i \in \RangeN{\NumItems}$ there exists a \emph{default
  action} $\Allocs_i^0 \in \AllocSpace_i$ for which the first set of constraints
in \cref{cp:util} hold strictly for all $\Allocss_{-i} \in \AllocSpaces_{-i}$,
then an equilibrium always exists.
\end{theorem}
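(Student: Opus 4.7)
\begin{proofsk}
The plan is to apply Kakutani's fixed point theorem to the joint best-response correspondence, where the key technical step is to use the default action hypothesis to establish continuity of the feasibility correspondence.

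For each player $i$, define the feasible correspondence
\[
F_i(\Allocss_{-i}) = \{\Allocs_i \in \AllocSpace_i : \ConsFun_{ik}(\Allocs_i,\Allocss_{-i}) \le 0 \text{ for all } k \in \RangeN{\NumCons_i}\}
\]
and the best-response correspondence
\[
R_i(\Allocss_{-i}) = \operatorname{arg\ max}\{\Val_i(\Allocs_i,\Allocss_{-i}) : \Allocs_i \in F_i(\Allocss_{-i})\}.
\]
Since each $\AllocSpace_i$ is compact and convex, so is $\AllocSpaces = \prod_i \AllocSpace_i$. Because $\Allocs_i^0 \in F_i(\Allocss_{-i})$ for every $\Allocss_{-i}$, the set $F_i(\Allocss_{-i})$ is nonempty; it is closed (by continuity of $\ConsFun_{ik}$) and contained in the compact $\AllocSpace_i$, so it is compact; and it is convex (by convexity of $\AllocSpace_i$ and of each $\ConsFun_{ik}(\cdot,\Allocss_{-i})$). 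Hence $R_i(\Allocss_{-i})$ is nonempty (continuous objective on a compact set) and convex (concavity of $\Val_i$ in $\Allocs_i$ together with convexity of $F_i(\Allocss_{-i})$).

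The main obstacle, as in Debreu's social equilibrium theorem, is establishing that $F_i$ is continuous as a correspondence so that Berge's maximum theorem applies. Upper hemicontinuity of $F_i$ follows immediately from continuity of the $\ConsFun_{ik}$ together with closedness of $\AllocSpace_i$. For lower hemicontinuity, fix $\Allocss_{-i}$, pick any $\Allocs_i \in F_i(\Allocss_{-i})$, and any sequence $\Allocss_{-i}^{(t)} \to \Allocss_{-i}$. For $\lambda \in (0,1]$ consider $\Allocs_i^\lambda := (1-\lambda)\Allocs_i + \lambda \Allocs_i^0$, which lies in $\AllocSpace_i$. Convexity of $\ConsFun_{ik}$ in its first argument gives
\[
\ConsFun_{ik}(\Allocs_i^\lambda, \Allocss_{-i}) \le (1-\lambda)\ConsFun_{ik}(\Allocs_i,\Allocss_{-i}) + \lambda \ConsFun_{ik}(\Allocs_i^0,\Allocss_{-i}) \le \lambda \ConsFun_{ik}(\Allocs_i^0,\Allocss_{-i}) < 0
\]
for every $\lambda > 0$, where the last inequality is the strict feasibility of the default action. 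By continuity of $\ConsFun_{ik}$ there is, for every $\lambda > 0$, some $t_\lambda$ such that $\ConsFun_{ik}(\Allocs_i^\lambda, \Allocss_{-i}^{(t)}) \le 0$ for all $t \ge t_\lambda$. Choosing $\lambda_t \downarrow 0$ slowly enough yields $\Allocs_i^{\lambda_t} \in F_i(\Allocss_{-i}^{(t)})$ with $\Allocs_i^{\lambda_t} \to \Allocs_i$, establishing lower hemicontinuity.

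With $F_i$ continuous and compact-valued and $\Val_i$ continuous, Berge's maximum theorem implies $R_i$ is upper hemicontinuous with nonempty, compact, convex values. The product correspondence $R(\Allocss) = \prod_i R_i(\Allocss_{-i})$ inherits these properties and maps the compact convex set $\AllocSpaces$ into itself. Kakutani's fixed point theorem then furnishes $\Allocss^* \in R(\Allocss^*)$, which by construction is an equilibrium of the game, completing the proof.
\end{proofsk}
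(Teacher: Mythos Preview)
Your proof is correct and follows essentially the same strategy as the paper: apply Kakutani to the joint best-response correspondence, with the crucial step being the convex combination $(1-\lambda)\Allocs_i + \lambda \Allocs_i^0$ against the strictly feasible default action to handle the moving feasible set. The only difference is packaging: you establish lower hemicontinuity of the feasibility correspondence $F_i$ explicitly and then invoke Berge's maximum theorem (the Debreu social-equilibrium framework), whereas the paper proves the closed-graph property of $R$ directly, constructing inside that argument a feasible approximant ${\Allocs'_i}^t = \lambda^t \Allocs_i^\dagger + (1-\lambda^t)\Allocs_i^0$ with an explicit choice $\lambda^t = \delta^0/(\delta^0+\delta^t)$ derived from uniform slack $\delta^0$ and current violation $\delta^t$. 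Your route is more modular and cites standard machinery; the paper's is self-contained and gives a concrete rate for $\lambda^t$; the underlying idea is identical.
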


\begin{proof}
Define the set of optimal actions of player $i$ in response to
$\Allocs_{-i} \in \AllocSpaces_{-i}$ as
\begin{align*}
  \Response_i(\Allocss_{-i}) &= \left\{\Allocs_i \in \AllocSpace_i \suchthat \text{$\Allocs_i$ is a solution to \cref{cp:util} with $\Allocss_{-i}$} \right\}.
\end{align*}
Define the collective optimal actions in response to $\AllocsY \in \AllocSpaces$ as
\begin{align*}
  \Response(\AllocssY) &= \left\{ \Allocs \in \AllocSpaces \suchthat \Allocs_i \in \Response_i(\AllocssY_{-i}) \quad \forall i \in \RangeN{\N} \right\}.
\end{align*}
It is easy to see that $\Response$ is a set valued function from $\AllocSpaces$
to $2^\AllocSpaces$ whose fixed points correspond to the equilibria of the
game. We prove that $\Response$ satisfies the requirements of Kakutani's fixed
point theorem and therefore must have a fixed point which would then imply the
existence of an equilibrium. It is easy to see that $\AllocSpaces$ is a
non-empty compact and convex set, and that $\Response(\AllocssY)$ is a also a
non-empty compact convex set for every $\AllocssY \in \AllocSpaces$. Therefore
we only need to prove that $\Response$ has a closed graph. Formally, we need to
show that for any sequence $({\Allocss}^t,{\AllocssY}^t)_{t\in \PosInts}$ where
${\Allocss}^t \in \Response(\AllocssY^t)$ for all $t \in \PosInts$, if
$\lim_{t \to \infty} (\Allocss^t,\AllocssY^t) = (\Allocss^*,\AllocssY^*)$, then
$\Allocss^* \in \Response(\AllocssY^*)$. Consider \cref{cp:util} in which
$\Allocss_{-i}$ set to $\AllocssY^*_{-i}$. From the continuity of
$\ConsFun_{ik}$ and compactness of $\AllocSpace_i$, it follows that
$\Allocss^*_i$ is a feasible action for player $i$ in response to
$\AllocssY^*_{-i}$, so we only need to show that it is also optimal. Pick any
$\Allocs^\dagger_i \in \Response_i(\AllocssY^*_{-i})$. We will prove optimality
by showing that
$\Val_i(\Allocs_i^*,\AllocssY^*_{-i}) \ge
\Val_i(\Allocs^\dagger_i,\AllocssY^*_{-i})$. Recall that
$\Val_i(\Allocs_i^*,\AllocssY^*_{-i})=\lim_{t\to
  \infty}\Val_i(\Allocs_i^t,\AllocssY^t_{-i})$, so ideally we would like to show
that
$\Val_i(\Allocs_i^t,\AllocssY^t_{-i}) \ge
\Val_i(\Allocs_i^\dagger,\AllocssY^t_{-i})$, however $\Allocss_i^\dagger$ may
not even be a feasible action in response to $\AllocsY_{-i}^t$ for any $t$ and
therefore such an inequality may not hold in general.
% \begin{align*}
%  \Val_i(\Allocss_i^*,\AllocssY^*_{-i}) = \lim_{t\to \infty}
% \end{align*}
% We prove that
% $\Val_i(\Allocss_i^*,\AllocssY^*_{-i})\ge
% \Val_i(\Allocss^\dagger_i,\AllocssY^*_{-i})$ which then implies
% $\Allocs^*_i \in \Response_i(\AllocssY^*_{-i})$. Note that $\Allocss_i^\dagger$
% may not even be a feasible action in response to $\AllocsY_{-i}^t$ for any $t$,
% so $\Val_i(\Allocss_i^t,\AllocssY^t_{-i})$ is not directly comparable to
% $\Val_i(\Allocss^\dagger_i,\AllocssY^t_{-i})$.
This is where we use the assumption of the theorem to show that there exists an
action ${\Allocss_i'}^t$ between $\Allocs_i^t$ and $\Allocs_i^\dagger$ which is
feasible for player $i$ in response to $\AllocssY_{-i}^t$ for all $t$ and also
converges to $\Allocs_i^\dagger$ as $t\to \infty$ which will imply the desired
inequality as follows:
\begin{align*}
  \Val_i(\Allocs_i^*,\AllocssY^*_{-i})
     &= \lim_{t\to \infty}\Val_i(\Allocs_i^t,\AllocssY^t_{-i}) \\
     &\ge \lim_{t\to \infty}\Val_i({\Allocs'_i}^t,\AllocssY^t_{-i}) \\
     &= \Val_i(\Allocs_i^\dagger,\AllocssY^*_{-i}).
\end{align*}
To define ${\Allocs'_i}^t$, we first define the following two quantities in
which $\Allocs^0_i$ is the default action of player $i$:
\begin{align*}
    \delta^t &= \max\left(0, \max_{i, k} \ConsFun_{ik}(\Allocs_i^\dagger,\AllocssY^t_{-i})\right)\\
    \delta^0 &= -\max_{i, k, \Allocss''_{-i}} \ConsFun_{ik}(\Allocs^0_i,\Allocss''_{-i}).
\end{align*}
Note that $\delta^0 > 0$ because of the hypothesis of the theorem. Let
$\lambda^t = \frac{\delta^0}{\delta^0+\delta^t}$. Define
${\Allocss'}^t = \lambda^t \Allocss^\dagger+(1-\lambda^t) \Allocss^0$. First
notice that ${\Allocs'_i}^t$ is a feasible response to $\AllocssY^t_{-i}$
because
\begin{align*}
    \ConsFun_{ik}({\Allocss'}_i^t, \AllocssY^t_{-i})
        &\le \lambda^t \ConsFun_{ik}(\Allocss^\dagger_i, \AllocssY^t_{-i})+(1-\lambda^t)\ConsFun_{ik}(\Allocss^0_i, \AllocssY^t_{-i}) & &\text{by convexity of $\ConsFun_{ik}$ in ${\Allocss'}_i^t$} \\
        &\le \lambda^t\delta^t-(1-\lambda^t)\delta^0
            & &\text{by definition of $\delta^t$ and $\delta^0$} \\
        &= 0 & &\text{by definition of $\lambda^t$}.
\end{align*}
Second, notice that continuity of $\ConsFun_{ik}$ guarantees
$\lim_{t \to \infty} \delta^t = 0$, which implies
$\lim_{t\to\infty} \lambda^t=1$ (recall that $\delta^0 > 0$), which implies
$\lim_{t \to \infty} {\Allocss'}^t = \Allocss^\dagger$ as claimed.

%  Furthermore, for every
% $t \in \PosInts$ we claim that ${\Allocss'}_i^t$ is a feasible action for player
% $i$ in response to $\AllocssY_{-i}^t$, and therefore
% $\Val_i(\Allocss^t, \AllocssY^t_{-i})\ge\Val_i({\Allocss'}^t,\AllocssY^t_{-i})$;
% then by taking the limit when $t \to \infty$ we conclude that
% $\Val_i(\Allocss_i^*,\AllocssY^*_{-i}) \ge \lim_{t\to
%   \infty}\Val_i({\Allocss'}_i^t,\AllocssY^t_{-i}) =
% \Val_i(\Allocss_i^\dagger,\AllocssY^*_{-i})$ which implies
% $\Allocss^* \in \Response(\AllocssY^*)$ and proves that $\Response$ has a
% closed graph.

%  By definition of ${\Allocss''}_i^t$ and by convexity of
% $\AllocSpace_i$ it is easy to see that ${\Allocss''}_i^t \in \AllocSpace_i$; so
% to show that ${\Allocss''}_i^t$ is a feasible action for player $i$ against
% $\Allocss_{-i}^t$ we only need to verify the first set of constraints in
% \cref{cp:util}:
% \begin{align*}
%     \ConsFun_{ik}({\Allocss''}_i^t, \Allocss^t_{-i})
%         &\le \lambda^t \ConsFun_{ik}({\Allocss''}^*_i, \Allocss^t_{-i})+(1-\lambda^t)\ConsFun_{ik}(\Allocss^0_i, \Allocss^t_{-i}) & &\text{by convexity of $\ConsFun_{ik}$ in ${\Allocss''}_i^t$} \\
%         &\le \lambda^t\delta^t-(1-\lambda^t)\delta^0
%             & &\text{by definition of $\delta^t$ and $\delta^0$}
%         &\le 0
% \end{align*}
% \todo{add intuition!}
\end{proof}

%\begin{center}
%\begin{tikzpicture}[marketstyle]
%    \def\marketsize{3};
%    \foreach \i in {1,...,\marketsize}
%        \node["\AgentID{\i}"below,circle,fill] (agent_\i) at (\i,0) {};
%    \foreach \j in {1,...,\marketsize}
%        \node["\ItemID{\j}"above,circle,fill] (item_\j) at (\j,1) {};
%
%    \foreach \i in {1,2,3}
%        \foreach \j in {1,2,3}
%            \draw (agent_\i)  edge["\j"{auto=false,fill=white,opacity=0.5,text opacity=1},pos=0.2] (item_\j);
%\end{tikzpicture}
%\end{center}

%\appendixhead{ZHOU}

\end{document}